\newtheorem{theorem}{Theorem}[section]
\newtheorem{lemma}[theorem]{Lemma}
\newtheorem{meta-theorem}[theorem]{Meta-Theorem}
\newtheorem{remark}[theorem]{Remark}
\newtheorem{corollary}[theorem]{Corollary}
\newtheorem{definition}[theorem]{Definition}
\definecolor{darkgreen}{rgb}{0,0.5,0}
\crefname{theorem}{Theorem}{Theorems}
\Crefname{lemma}{Lemma}{Lemmas}
\Crefname{observation}{Observation}{Observations}
\Crefname{remark}{Remark}{Remarks}
\Crefname{equation}{}{}
\newcommand{\eps}{\varepsilon}
\newcommand{\local}{$\mathsf{LOCAL}$\xspace}
\newcommand{\poly}{\operatorname{\text{{\rm poly}}}}
\newcommand{\floor}[1]{\lfloor #1 \rfloor}
\newcommand{\ceil}[1]{\lceil #1 \rceil}
\newcommand{\set}[1]{\left\{#1\right\}}
\newcommand{\paren}[1]{\mathopen{}\left(#1\right)\mathclose{}}
\newcommand{\card}[1]{\left|#1\right|}
\renewcommand{\paragraph}[1]{\vspace{0.15cm}\noindent {\bf #1}:}
\begin{document}

\date{}

\title{Improved Deterministic Distributed Matching via Rounding}

\author{
	 Manuela Fischer\\
  \small ETH Zurich \\
  \small manuela.fischer@inf.ethz.ch
 }

\maketitle

\setcounter{page}{0}
\thispagestyle{empty}

\begin{abstract}
We present improved deterministic distributed algorithms for a number of well-studied matching problems, which are simpler, faster, more accurate, and/or more general than their known counterparts. The common denominator of these results is a \emph{deterministic distributed rounding} method for certain \emph{linear programs}, which is the first such rounding method, to our knowledge. 
A sampling of our end results is as follows.

\begin{itemize}
\item An $O\paren{\log^2  \Delta\cdot \log n}$-round deterministic distributed algorithm for computing a maximal matching, in $n$-node graphs with maximum degree $\Delta$. This is the first improvement in about 20 years over the celebrated $O(\log^4 n)$-round algorithm of Ha\'n\'ckowiak, Karo\'nski, and Panconesi [SODA'98, PODC'99]. 

\item A deterministic distributed algorithm for computing a $(2+\eps)$-approximation of maximum matching in $O\paren{\log^2 \Delta \cdot  \log \frac{1}{\eps} + \log^ * n}$ rounds. This is exponentially faster than the classic $O(\Delta +\log^* n)$-round $2$-approximation of Panconesi and Rizzi [DIST'01]. With some modifications, the algorithm can also find an $\eps$-maximal matching which leaves only an $\eps$-fraction of the edges on unmatched nodes.

\item An $O\paren{\log^2 \Delta \cdot \log \frac{1}{\eps} + \log^ * n}$-round 
deterministic distributed algorithm for computing a $(2+\eps)$-approximation of a maximum weighted matching, and also for the more general problem of maximum weighted $b$-matching. These improve over the $O\paren{\log^4 n \cdot \log_{1+\eps} W}$-round $(6+\eps)$-approximation algorithm of Panconesi and Sozio [DIST'10], where $W$ denotes the maximum normalized weight.

\item A deterministic \emph{Local Computation Algorithm} (LCA) for a $(2+\eps)$-approximation of maximum matching with $2^{O(\log^2{\Delta})}\cdot  \log^* n$ queries. This improves almost exponentially over the previous deterministic constant approximations with query-complexity $2^{\Omega(\Delta\cdot \log \Delta)}\cdot  \log^* n$. 
\end{itemize}

\end{abstract}

\newpage

\section{Introduction and Related Work}
\vspace{-6pt}

We work with the standard \local model of distributed computing \cite{linial1987LOCAL}: the network is abstracted as a graph $G=(V, E)$, with $n=|V|$, $m=|E|$, and maximum degree $\Delta$. Each node has a unique identifier. In each round, each node can send a message to each of its neighbors. We do not limit the message sizes, but for all the algorithms that we present, $O(\log n)$-bit messages suffice. We assume that all nodes have knowledge of $\log \Delta$ up to a constant factor. If this is not the case, it is enough to try exponentially increasing estimates for $\log \Delta$. 
\vspace{-8pt}
\subsection{Broader Context, and Deterministic Distributed Rounding}
\vspace{-6pt}
Efficient deterministic distributed graph algorithms remain somewhat of a rarity, despite the intensive study of the area since the 1980's. In fact, among the four classic problems of the area --- maximal independent set, $(\Delta+1)$-vertex-coloring, maximal matching, and $(2\Delta-1)$-edge-coloring --- only for maximal matching a $\poly\log n$-round deterministic algorithm is known, due to a breakthrough of Ha\'n\'ckowiak, Karo\'nski, and Panconesi \cite{HanckowiakKP98, hanckowiak1999faster}. Finding $\poly\log n$-round deterministic algorithms for the other three problems remains a long-standing open question, since \cite{linial1987LOCAL}. In a stark contrast, in the world of randomized algorithms, all these problems have $O(\log n)$-round \cite{luby1986simple, alon1986fast} or even more efficient algorithms \cite{barenboim2012locality, Ghaffari-MIS, harris2016distributed}. 

Despite this rather bleak state of the art for deterministic algorithms, there is immense motivation for them. Here are three sample reasons: (1) One traditional motivation is rooted in the classic complexity-theoretic quest which seeks to understand the difference between the power of randomized and distributed algorithms. (2) Another traditional motivation comes from practical settings where even small error probabilities cannot be tolerated. (3) Nowadays, there is also a more modern motive: we now understand that in order to have faster randomized algorithms, we \emph{must} come up with faster deterministic algorithms.\footnote{For instance, our improvement in the deterministic complexity of maximal matching directly improves the randomized complexity of maximal matching, as we formally state in \Cref{Crl:RandMM}.} This connection goes in two directions: (A) Almost all the recent developments in randomized algorithms use the \emph{shattering technique} \cite{barenboim2012locality, Ghaffari-MIS, harris2016distributed, GS17} which randomly breaks down the graph into small components, typically of size $\poly \log n$, and then solves them via a deterministic algorithm. Speeding up (the $n$-dependency in) these randomized algorithms needs faster deterministic algorithms. (B) The more surprising  direction is the reverse. Chang et al. \cite{chang2016exponential} recently showed that for a large class of problems the randomized complexity on $n$-node graphs is at least the deterministic complexity on $\Theta\paren{\sqrt{\log n}}$-node graphs. Hence, if one improves over (the $n$-dependency in) the current randomized algorithms, one has inevitably improved the corresponding deterministic algorithm. 

Ghaffari, Kuhn, and Maus \cite{ghaffari2016complexity} recently proved a \emph{completeness}-type result which shows that \emph{``the only obstacle"} for efficient deterministic distributed graph algorithms is deterministically \emph{rounding} fractional values to integral values while approximately preserving some linear constraints.\footnote{Stating this result in full generality requires some definitions. See \cite{ghaffari2016complexity} for the precise statement.} To put it more positively, if we find an efficient deterministic method for rounding, we would get efficient algorithms for essentially all the classic local graph problems, including the four mentioned above. Our results become more instructive when viewed in this context. The common denominator of our results is a deterministic distributed method which allows us to round fractional matchings to integral matchings. This can be more generally 
seen as rounding the fractional solutions of a special class of \emph{linear programs} (LPs) to integral solutions.  To the best of our knowledge, this is the first known \emph{deterministic distributed rounding} method. We can now say that

\begin{center}
\begin{minipage}{0.8\linewidth}
\vspace{-8pt}
\begin{mdframed}[hidealllines=true, backgroundcolor=gray!00]
\emph{matching admits an efficient deterministic algorithm because \\matching admits an efficient deterministic distributed rounding.}  
\vspace{-2pt}
\end{mdframed}
\end{minipage}
\end{center}

\subsection{Our Results}\label{subsec:results}
We provide improved distributed algorithms for a number of matching problems, as we overview next.
\subsubsection{Approximate Maximum Matching}
\begin{theorem}\label{2+eps-approx-MaxM-bipartite}
There is an $O\paren{\log^2 \Delta \cdot \log \frac{1}{\eps} + \log^* n}$-round deterministic distributed algorithm for a $(2+\eps)$-approximate maximum matching, for any $\eps>0$. 
\end{theorem}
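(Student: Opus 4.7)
My plan is to decouple the problem into two deterministic distributed steps: (i) compute a near-maximum \emph{fractional} matching $y$, and (ii) round $y$ into an integral matching while losing only a $(2+\eps)$ factor. Concretely, if $|y| \geq (1-\delta)\,\mathrm{OPT}$ and the rounding produces a matching $M$ with $|M| \geq |y|/(2+\delta)$, then choosing $\delta = \Theta(\eps)$ yields the target $(2+\eps)$-approximation. Here $\mathrm{OPT}$ is the maximum integral matching size; the inequality $|y| \geq (1-\delta)\,\mathrm{OPT}$ holds automatically for any $(1-\delta)$-approximate fractional matching, since every integral matching is also fractional.

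For step (i), I would invoke a deterministic distributed $(1+\delta)$-approximate fractional matching procedure based on a primal-dual or multiplicative-weights scheme. A preliminary $O(\log^* n)$-round Linial-style $O(\Delta^2)$-coloring breaks symmetry; thereafter, the LP-style updates only require local information within balls of radius $\poly(\log\Delta,\log(1/\delta))$, which should fit the $O\paren{\log^2\Delta\cdot\log(1/\eps)+\log^* n}$ budget. This step is the more standard of the two, leveraging existing distributed LP technology and accounting for the $\log^* n$ term in the round complexity.

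Step (ii) is the heart of the argument and the main obstacle, and is exactly where the paper's promised deterministic distributed rounding machinery must do the work. The plan is to push the fractional values toward $\{0,1\}$ in $O(\log(1/\eps))$ bit-rounding phases. First, truncate all weights below $\Theta(\eps/\Delta)$, which loses only an $O(\eps)$ fraction of the total fractional mass. Each subsequent phase refines the resolution from $2^{-(k+1)}$ to $2^{-k}$ by locally deciding, for each surviving edge, whether to double its weight or drop it, while maintaining a valid (slightly relaxed) fractional matching at every vertex. A careful scheme can keep the per-phase loss to a multiplicative $1+O(\eps/\log(1/\eps))$, so after $O(\log(1/\eps))$ phases we obtain a (near-)half-integral matching of value at least $|y|/(1+O(\eps))$, which one final pass converts into an integral matching at an additional factor-$2$ loss, matching the LP's integrality gap and giving the $(2+\eps)$ factor. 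The principal technical challenge is implementing each phase deterministically in $O(\log^2\Delta)$ rounds: I would attempt this by derandomizing the natural randomized rounding, resolving local conflicts via a deterministic subroutine on an auxiliary low-degree graph (plausibly a maximal matching or MIS instance on a graph of degree $\poly(\Delta)$) and controlling the accumulated fractional mass loss by a potential argument rather than a concentration inequality.
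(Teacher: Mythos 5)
There is a genuine gap, and it sits exactly where you place the ``heart of the argument.'' Your architecture requires two things the paper never delivers and that are not known to be achievable by these techniques: (i) a deterministic $(1+\delta)$-approximate \emph{fractional} matching within the stated budget, and, far more critically, (ii) a deterministic rounding whose total multiplicative loss is $2+O(\eps)$. For (ii) you assert that ``a careful scheme can keep the per-phase loss to a multiplicative $1+O(\eps/\log(1/\eps))$,'' but this is precisely the hard part, restated as a sub-claim. The paper's rounding (its Rounding Lemma) incurs a loss of roughly $\frac{3}{\ell}+2^{-i+3}$ when eliminating the value class $2^{-i}$ --- a quantity that is a \emph{constant} fraction for small $i$, which is why the paper stops at $2^{-4}$-fractionality rather than rounding to integrality, and why the final conversion (a maximal matching on the degree-$16$ support) costs another factor of about $31$. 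The end product of the paper's entire rounding pipeline is a $434$-approximation, not a $(2+O(\eps))$-approximation. The losses at coarse scales come from handling odd-degree vertices, path endpoints, and tight nodes, and there is no mechanism in sight for driving them down to $1+O(\eps)$ per phase; your appeal to ``derandomizing the natural randomized rounding'' via an auxiliary MIS/matching instance does not address this. (There is also a smaller accounting issue: after truncating at $\Theta(\eps/\Delta)$ you have $\Theta(\log\Delta+\log\frac{1}{\eps})$ value classes, not $O(\log\frac{1}{\eps})$, and at $O(\log^2\Delta)$ rounds per phase you would already exceed the theorem's round bound.)

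The paper reaches $2+\eps$ by a different mechanism that you should note, because it sidesteps the need for low-loss rounding entirely. It is content with a \emph{large-constant}-approximate integral matching $M_0$ (fractional $4$-approximation by greedy doubling, lossy rounding to $2^{-4}$-fractionality, then a maximal matching on the constant-degree support), and then iterates: remove $M_i$ and all incident edges, recompute a $c$-approximate matching on the remainder, for $k=O(\log\frac{1}{\eps})$ iterations. Since each iteration shrinks the maximum matching of the remainder by a factor $(1-\frac{1}{c})$, after $k$ iterations the leftover optimum is at most $\frac{\eps}{2(2+\eps)}|M^*|$; the union $\bigcup_i M_i$ is maximal outside this negligible remainder, and the factor $2$ then comes for free from the classical inequality $|M|\geq |M^*|/2$ for maximal matchings --- not from the tightness of any rounding step. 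If you want to salvage your plan, you should replace your step (ii) with this peeling-plus-maximality argument, which only needs a constant-factor (arbitrarily bad constant) approximation as its building block.
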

\noindent There are three remarks in order, regarding this result:
\begin{itemize}
\item For constant $\eps>0$, this $O(\log^2 \Delta + \log^* n)$-round algorithm is significantly faster than the previously best known deterministic constant approximations, especially in low-degree graphs: the $O(\Delta + \log^* n)$-round $2$-approximation of Panconesi and Rizzi \cite{panconesirizzi2000}, the $O(\log^4 n)$-round $2$-approximation of Ha\'n\'ckowiak et al. \cite{hanckowiak1999faster}, the $O(\log^4 n)$-round $(3/2)$-approximation of Czygrinow et al. \cite{czygrinow2004distributed,czygrinow2004fast}, and its extension \cite{czygrinow2003distributed} which finds a $(1+\eps)$-approximation in $\log^{O\paren{\frac{1}{\eps}}} n$ rounds. 

\item This $O(\log^2 \Delta + \log^* n)$-round complexity gets close to the lower bound --- due to the celebrated results of Kuhn et al. \cite{DBLP:conf/soda/KuhnMW06,DBLP:journals/jacm/KuhnMW16} and Linial \cite{linial1987LOCAL} --- of $\Omega(\log \Delta/\log\log \Delta + \log^* n)$ that holds for any constant approximation of matching, even for randomized algorithms.

\item This distributed \local algorithm can be transformed to a deterministic \emph{Local Computation Algorithm} (LCA) \cite{Alon2012LCA, Rubinfeld2011LCA} for a $(2+\eps)$-approximation of maximum matching, with a query complexity of $2^{O(\log^3{\Delta})}\cdot \log^* n$. This is essentially by using the standard method of Parnas and Ron~\cite{parnas2007approximating}, with an additional idea of \cite{even2014deterministic}. Using slightly more care, the query complexity can be improved to $2^{O(\log^2{\Delta})} \cdot \log^* n$. Since formally stating this result requires explaining the computational model of LCAs, we defer that to the journal version. We remark that this query complexity improves almost exponentially over the previous deterministic constant approximations with $2^{\Omega(\Delta\cdot \log \Delta)} \cdot\log^* n$ \cite{even2014deterministic}.

\end{itemize}

\subsubsection{(Almost) Maximal Matching, and Edge Dominating Set}
\paragraph{Maximal Matching} 
Employing our approximation algorithm for maximum matching, we get an $O(\log^2 \Delta \cdot \log n)$-round deterministic distributed algorithm for maximal matching. 
\begin{theorem}\label{MM}
There is an $O(\log^2 \Delta \cdot \log n)$-round deterministic maximal matching algorithm.
\end{theorem}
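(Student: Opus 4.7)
The plan is to reduce the maximal matching problem to $O(\log n)$ invocations of the $\eps$-maximal matching variant promised in the abstract (the modification of \Cref{2+eps-approx-MaxM-bipartite} that leaves only an $\eps$-fraction of the edges on pairs of unmatched nodes). Concretely, I would fix $\eps = 1/2$, and iterate the following peeling procedure. In iteration $i$, run the $\eps$-maximal matching algorithm on the subgraph $G_i$ induced by the currently unmatched vertices to obtain a matching $M_i$. Add $M_i$ to the running matching $M := \bigcup_j M_j$, then delete the $M_i$-matched vertices together with all their incident edges, producing $G_{i+1}$.

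The correctness argument is the standard edge-halving one: by the $\eps$-maximality guarantee with $\eps=1/2$, the number of edges of $G_i$ whose \emph{both} endpoints remain unmatched by $M_i$ is at most $|E(G_i)|/2$, and these are exactly the edges surviving into $G_{i+1}$. Thus $|E(G_{i+1})| \le |E(G_i)|/2$, and after $O(\log m) = O(\log n)$ iterations we have $E(G_i) = \emptyset$. At that point every edge of the original graph has at least one endpoint in the returned matching $M$, so $M$ is maximal. Moreover $M$ is a valid matching because each $M_i$ is chosen inside $G_i$, which contains only vertices unmatched by the previously selected $M_j$'s.

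For the round complexity, each iteration runs the approximate matching algorithm once on a subgraph of $G$ with maximum degree at most $\Delta$, costing $O(\log^2 \Delta \cdot \log \frac{1}{\eps} + \log^* n) = O(\log^2 \Delta + \log^* n)$ rounds since $\eps = 1/2$ is constant. Multiplying by the $O(\log n)$ outer iterations yields $O\!\paren{(\log^2 \Delta + \log^* n)\cdot \log n}$. The case $\Delta \le 2$ is trivial (the graph is a disjoint union of paths and cycles, solvable in $O(\log^* n)$ rounds), so we may assume $\log^2 \Delta \ge 1$, and the complexity collapses to $O(\log^2 \Delta \cdot \log n)$ as claimed.

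The only subtle point I anticipate is ensuring the $\eps$-maximal guarantee (rather than a mere $(2+\eps)$-approximation) is actually what drives the edge-halving: a $(2+\eps)$-approximation by itself only bounds the size of the residual \emph{matching}, not the number of residual edges, so we genuinely need the stronger ``$\eps$-fraction of edges left between unmatched vertices'' version flagged in the paper's second bullet. Given that statement, the reduction above is clean and the bookkeeping on $\Delta$ and $\log^* n$ is the only real thing to check.
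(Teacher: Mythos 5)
Your reduction is sound, but it follows a genuinely different route from the paper's. The paper iterates the plain constant-approximate \emph{maximum} matching algorithm of \Cref{const-approx-MaxM} and tracks the size of a maximum matching in the residual graph: each iteration shrinks it by a factor $\paren{1-\frac{1}{c}}$, so after $k=\log_{1-1/c}\frac{1}{n}$ iterations it drops below $1$ and the residual graph is empty, whence the union is maximal. You instead track the residual \emph{edge count} via the $\eps$-maximal matching guarantee of \Cref{eps-almost-MM-bipartite} with $\eps=\frac{1}{2}$, halving the number of surviving edges per iteration. Both give $O(\log n)$ iterations, but your version leans on the heavier machinery of \Cref{sec:almostMM} (the weighted-matching-based \Cref{const-almost-MM}), whereas the paper's proof needs only the unweighted constant approximation. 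Your closing observation is correct that a bare $(2+\eps)$-approximation does not drive edge-halving --- but note that it \emph{does} suffice for the paper's matching-size-decay argument, which is exactly why the paper can prove \Cref{MM} without the weighted detour. There is no circularity in relying on \Cref{eps-almost-MM-bipartite}, since that theorem is established independently of \Cref{MM}.

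There is one genuine gap, in the final complexity accounting. From $O\paren{(\log^2\Delta+\log^* n)\cdot\log n}$ you cannot reach $O(\log^2\Delta\cdot\log n)$ merely by assuming $\Delta\ge 3$: for constant $\Delta$ and growing $n$, the term $\log^* n\cdot\log n$ is not $O(\log^2\Delta\cdot\log n)$, so the bound does not ``collapse'' as you claim. The fix is \Cref{precompute}: precompute an $O(\Delta^2)$-coloring once in $O(\log^* n)$ rounds using Linial's algorithm, after which each of the $O(\log n)$ iterations costs only $O(\log^2\Delta+\log^*\Delta)=O(\log^2\Delta)$ rounds, yielding $O(\log^2\Delta\cdot\log n+\log^* n)=O(\log^2\Delta\cdot\log n)$ overall.
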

This is the first improvement in about 20 years over the breakthroughs of Ha\'n\'ckowiak et al., which presented first an $O(\log^7 n)$- \cite{HanckowiakKP98} and then an $O(\log^4 n)$-round \cite{hanckowiak1999faster} algorithm for maximal matching.

As alluded to before, this improvement in the deterministic complexity directly leads to an improvement in the $n$-dependency of the randomized algorithms. In particular, plugging in our improved deterministic algorithm in the maximal matching algorithm of Barenboim et al. \cite{barenboim2012locality} improves their round complexity from $O(\log^4 \log n + \log\Delta)$ to $O(\log^3 \log n + \log \Delta)$. 

\begin{corollary}\label{Crl:RandMM}
There is an $O(\log^3 \log n + \log \Delta)$-round randomized distributed algorithm that with high probability\footnote{As standard, \emph{with high probability} means with probability at least $1-1/n^{c}$, for a desirably large constant $c\geq 2$.} computes a maximal matching.
\end{corollary}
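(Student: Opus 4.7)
My plan is to follow the standard \emph{shattering} framework of Barenboim, Elkin, Pettie, and Schneider~\cite{barenboim2012locality}, and simply replace the black-box deterministic subroutine used in their post-shattering phase by our new algorithm from \Cref{MM}. Concretely, their maximal matching algorithm runs in two phases: first, an $O(\log \Delta)$-round randomized \emph{pre-shattering} phase that commits some edges to the matching and removes their endpoints, leaving a residual graph $H$ on the still-unmatched vertices; second, a deterministic phase that finds a maximal matching in $H$. The key property, proved in~\cite{barenboim2012locality}, is that with high probability every connected component of $H$ has size at most $O(\log n)$.

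The speedup comes from the second phase. In~\cite{barenboim2012locality}, they invoke the deterministic algorithm of Ha\'n\'ckowiak et al.~\cite{hanckowiak1999faster}, whose complexity on an $N$-vertex graph is $O(\log^4 N)$, so on components of size $N = O(\log n)$ this contributes $O(\log^4 \log n)$ rounds. Instead, I would run the algorithm of \Cref{MM} in parallel on each residual component. On a component with $N = O(\log n)$ vertices, the maximum degree is trivially bounded by $N-1 = O(\log n)$, so $\log \Delta_{\mathrm{comp}} = O(\log \log n)$, and the round complexity of the new deterministic maximal matching algorithm on such a component is
\[
  O\paren{\log^2 \Delta_{\mathrm{comp}} \cdot \log N} \;=\; O\paren{(\log \log n)^2 \cdot \log \log n} \;=\; O(\log^3 \log n).
\]
Since distinct components are at pairwise distance at least two in $G$, the \local model allows us to run these invocations simultaneously without interference, so the total cost of the post-shattering phase is $O(\log^3 \log n)$.

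Adding the $O(\log \Delta)$ rounds of the pre-shattering phase gives the claimed $O(\log^3 \log n + \log \Delta)$ total round complexity. The failure probability is inherited from the shattering guarantee of~\cite{barenboim2012locality}, which is polynomially small in $n$, giving the high-probability statement. The main (and essentially only) thing to check carefully is that the post-shattering invocation of \Cref{MM} can indeed be executed component-wise and that the component-internal parameters $N$ and $\Delta_{\mathrm{comp}}$ satisfying the required bound of $O(\log n)$ can be used in the algorithm (e.g., that the crude estimate $\log \Delta_{\mathrm{comp}} = O(\log\log n)$ suffices as input); both of these are routine given the self-contained nature of the component-wise executions in the \local model.
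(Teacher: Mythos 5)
Your proposal is correct and is essentially the paper's own argument: the paper proves \Cref{Crl:RandMM} only by the one-line remark that plugging the deterministic algorithm of \Cref{MM} into the post-shattering phase of Barenboim et al.~\cite{barenboim2012locality} replaces the $O(\log^4 N)$ cost on $\poly(\log n)$-size components by $O(\log^2\Delta_{\mathrm{comp}}\cdot\log N)=O(\log^3\log n)$, which is exactly the substitution you carry out.
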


\paragraph{Almost Maximal Matching} Recently, there has been quite some interest in characterizing the $\Delta$-dependency in the complexity of maximal matching, either with no dependency on $n$ at all or with at most an $O(\log^* n)$ additive term \cite{Hirvonen:2012,Goos:2014}. 
G\"{o}\"{o}s et al. \cite{Goos:2014} conjectured that 
\begin{center}
\emph{there should be no $o(\Delta) +O(\log^* n)$ algorithm for computing a maximal matching.} 
\end{center}
\Cref{MM} does not provide any news in this regard, because of its multiplicative $\log n$-factor. Indeed, our findings also seem to be consistent with this conjecture and do not suggest any way for breaking it. 
However, using some extra work, we can get a faster algorithm for $\eps$- maximal matching, a matching that leaves only $\eps$-fraction of edges among unmatched nodes, for a desirably small $\eps>0$.  

\begin{theorem}\label{eps-almost-MM-bipartite}
There is an $O\paren{\log^2 \Delta \cdot \log \frac{1}{\eps} + \log^* n}$-round deterministic distributed algorithm for an $\eps$-maximal matching, for any $\eps>0$. 
\end{theorem}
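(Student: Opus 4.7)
The plan is to apply the approximate matching algorithm of \Cref{2+eps-approx-MaxM-bipartite} iteratively, peeling off matched vertices each round. Maintain an accumulated matching $M$ and a residual graph $G_i$ induced on the currently unmatched vertices, with $M = \emptyset$ and $G_0 = G$. In iteration $i = 0, 1, \ldots, T-1$, invoke \Cref{2+eps-approx-MaxM-bipartite} on $G_i$ with a constant parameter $\eps_0 = \Theta(1)$ to obtain a matching $M_i$ with $|M_i| \geq \mu(G_i)/(2+\eps_0)$; then set $M \leftarrow M \cup M_i$ and $G_{i+1} \leftarrow G_i - V(M_i)$. After $T = \Theta(\log (1/\eps))$ iterations, output $M$.

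For the round complexity, observe that the additive $\log^* n$ term in \Cref{2+eps-approx-MaxM-bipartite} corresponds to an initial symmetry-breaking step (for instance, an $O(\Delta^2)$-coloring or low-diameter decomposition) which can be performed once on $G$ and reused by all $T$ iterations on subgraphs of $G$. Each iteration then costs $O(\log^2 \Delta)$ rounds, so the total cost is $O(\log^2 \Delta \cdot \log(1/\eps) + \log^* n)$ as desired.

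For correctness we must show $|E(G_T)| \leq \eps |E|$. From $\mu(G_{i+1}) + |M_i| \leq \mu(G_i)$ together with $|M_i| \geq \mu(G_i)/(2+\eps_0)$, we obtain $\mu(G_{i+1}) \leq \bigl(1 - \Omega(1)\bigr)\mu(G_i)$, hence $\mu(G_T) \leq 2^{-\Omega(T)}\,\mu(G)$. The hard part is converting this matching bound into an edge bound: the crude inequality $|E(G_T)| \leq 2\Delta \cdot \mu(G_T)$ would force an extra $\log \Delta$ factor in $T$, which the claim does not allow.

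The main obstacle is therefore strengthening the per-iteration guarantee to a constant-factor decrease in the number of residual edges rather than just in the residual matching number. I would attack this by exploiting the internal rounding structure of \Cref{2+eps-approx-MaxM-bipartite}: the algorithm rounds a near-maximum fractional matching, whose dual near-saturates (in the LP sense) a near-minimum fractional vertex cover $C$ of $G_i$, and the rounding matches a constant fraction of $C$. Since $C$ hits every edge of $G_i$, matching a constant fraction of its vertices destroys a constant fraction of the edges of $G_i$, giving the desired geometric decay in $|E(G_i)|$. Formalizing this edge-destruction bound, and verifying that the approximate dual can be extracted from the same rounding pipeline without additional rounds, is the main technical step.
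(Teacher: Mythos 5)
You correctly isolate the crux: iterating the unweighted $(2+\eps_0)$-approximation only yields geometric decay of the residual matching number $\mu(G_i)$, and converting that to an edge bound via $|E(G_i)|\leq 2\Delta\cdot\mu(G_i)$ costs an extra $\log\Delta$ factor in the number of iterations. But your proposed repair does not work as stated. The inference ``the rounding matches a constant fraction of a near-minimum (fractional) vertex cover $C$, and since $C$ hits every edge, a constant fraction of the edges is destroyed'' is false: which edges get destroyed depends on the \emph{degrees} of the matched cover vertices, not on how many of them (or how much of their LP weight) the matching touches. Consider a star with $\Delta$ leaves together with $\sqrt{\Delta}$ disjoint independent edges. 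The minimum vertex cover consists of the star center plus one endpoint per independent edge; the matching consisting of all the independent edges covers all but one cover vertex --- essentially all of the dual weight --- yet destroys only $\sqrt{\Delta}$ of the roughly $\Delta$ edges. More structurally, the optimal dual value of the matching LP is $\Theta(\mu(G))$, which bears no fixed relation to $|E|$, so no amount of near-saturating the dual can by itself produce a bound relative to $|E|$.

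What is needed, and what the paper actually does (\Cref{const-approximate-MaxWM} and \Cref{const-almost-MM}), is a \emph{degree-weighted} guarantee: assign each edge $e=\{u,v\}$ the weight $w_e=d_G(u)+d_G(v)-1$, i.e., the number of edges deleted when $e$ and its incident edges are removed, and compute a constant-factor approximate maximum \emph{weighted} matching. A blame argument shows that any maximal matching has weight at least $|E|$ under these weights, so a $256$-approximate weighted matching has weight at least $|E|/256$, and removing it together with its incident edges deletes at least $|E|/512$ edges, since each edge is adjacent to at most two matched edges. The weighted approximation is itself obtained from the unweighted one by rounding weights down to powers of $8$, running the unweighted algorithm on each weight class in parallel, and resolving conflicts in favor of the heavier class. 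Your round-complexity bookkeeping (precomputing an $O(\Delta^2)$-coloring once and paying $O(\log^2\Delta)$ per iteration, as in \Cref{precompute}) is fine, but without the weighted-matching step the correctness argument has a genuine hole that your sketch defers to exactly the step that fails.
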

This theorem statement is interesting because of two aspects: (1) This faster almost maximal matching algorithm sheds some light on the difficulties of proving the aforementioned conjecture. In a sense, any conceivable proof of this conjectured lower bound must distinguish between maximal and almost maximal matchings and rely on the fact that precisely a maximal matching is desired, and not just something close to it. Notice that since the complexity of \Cref{eps-almost-MM-bipartite} grows slowly as a function of $\eps$, we can choose $\eps$ quite small. By setting $\eps = \Delta^{-\poly\log \Delta}$, we get an algorithm that, in $O(\poly\log \Delta + \log^* n)$ rounds, produces a matching that seems to be maximal for almost all nodes, even if they look up to their $\poly\log \Delta$-hop neighborhood. (2) Perhaps, in some practical settings, this almost maximal matching, which practically looks maximal for essentially all nodes, may be as useful as maximal matching, especially since it can be computed much faster. 

\paragraph{Approximate Minimum Edge Dominating Set} As a corollary of the almost maximal matching algorithm of \Cref{eps-almost-MM-bipartite}, we get a fast algorithm for approximating \emph{minimum edge dominating set}, which is the smallest set of edges such that any edge shares at least one endpoint with them. 
The proof appears in \Cref{sec:EDS}.  

\begin{corollary}\label{2+eps-approx-EDS}
There is an $O(\log^2\Delta \cdot \log \frac{\Delta}{\eps}+ \log^*n)$-round deterministic distributed algorithm for a $(2+\eps)$-approximate minimum edge dominating set, for any $\eps> 0$. 
\end{corollary}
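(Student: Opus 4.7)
The plan is to reduce to an $\eps'$-maximal matching computation (\Cref{eps-almost-MM-bipartite}) and exploit the classical observation that a maximal matching is already a $2$-approximate edge dominating set. Concretely, I would set $\eps' := \eps/(2\Delta)$, invoke \Cref{eps-almost-MM-bipartite} to obtain an $\eps'$-maximal matching $M$, and then, in one additional round, have each node inspect its incident edges and collect those whose \emph{other} endpoint is also unmatched into an auxiliary set $U$. The output is $D := M \cup U$. The total round complexity is $O\paren{\log^2 \Delta \cdot \log(1/\eps') + \log^* n} = O\paren{\log^2 \Delta \cdot \log(\Delta/\eps) + \log^* n}$, as claimed.

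For correctness, $D$ is an edge dominating set: any edge with at least one matched endpoint is dominated by $M$, and any remaining edge is in $U$ by construction. For the approximation ratio, let $D^*$ be an optimal edge dominating set and $M^*$ a maximum matching. The standard bound $|D^*| \geq |M^*|/2$ holds since each edge of $D^*$ has two endpoints, each incident to at most one edge of $M^*$, so $D^*$ dominates at most $2|D^*|$ edges of $M^*$. Because $|M| \leq |M^*|$, we get $|M| \leq 2|D^*|$. Moreover, a single edge of $D^*$ dominates at most $2\Delta - 1$ edges of $G$, so $|D^*| \geq m/(2\Delta - 1)$. Combined with $|U| \leq \eps' m$ from the $\eps'$-maximality of $M$, this yields $|U| \leq \eps'(2\Delta-1)|D^*| \leq \eps |D^*|$, so $|D| \leq |M| + |U| \leq (2+\eps)|D^*|$.

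The main subtlety is conceptual rather than technical: one must pick $\eps'$ small enough to absorb the $\Theta(\Delta)$ slack between the crude bound on $|U|$ and $|D^*|$, while trusting that \Cref{eps-almost-MM-bipartite} only charges $\log(1/\eps')$ for this. Because this dependence is logarithmic, inflating $1/\eps'$ by a factor of $\Delta$ merely changes $\log(1/\eps)$ into $\log(\Delta/\eps)$ in the final complexity, with no other part of the argument being delicate. A minor caveat is that the precise form of ``$\eps'$-maximal'' used in \Cref{eps-almost-MM-bipartite} (edges among unmatched nodes are at most an $\eps'$-fraction of $m$) must line up with the accounting above; if a slightly different normalization is used, one just tightens $\eps'$ by another $\poly(\Delta)$ factor, which again only affects the constant inside $\log(\Delta/\eps)$.
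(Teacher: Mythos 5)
Your proposal is correct and takes essentially the same route as the paper: invoke the $\eps'$-maximal matching algorithm with $\eps'=\Theta(\eps/\Delta)$, output the matching together with all undominated edges, and absorb the at most $\eps'|E|$ leftover edges via the bound $|D^*|\geq |E|/(2\Delta-1)$. The only cosmetic difference is that you derive $|M|\leq 2|D^*|$ directly from the domination property of $D^*$, whereas the paper routes through the fact that a minimum maximal matching is a minimum edge dominating set together with the $2$-factor relation between maximal and maximum matchings; your version is, if anything, slightly cleaner since it never needs $M$ itself to be maximal.
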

Previously, the fastest algorithms ran in $O(\Delta + \log^* n)$ rounds \cite{panconesirizzi2000} or $O(\log^4 n)$ rounds \cite{hanckowiak1999faster}, providing $2$-approximations. Moreover, Suomela \cite{suomela2010EdgeDominatingSets} provided roughly $4$-approximations in $O(\Delta^2)$ rounds, in a restricted variant of the \local model with only port numberings.

\subsubsection{Approximate Maximum Weighted Matching and B-Matching}
An interesting aspect of the method we use is its flexibility and generality. In particular, the algorithm of \Cref{2+eps-approx-MaxM-bipartite} can be easily extended to computing a $(2+\eps)$-approximation of maximum weighted matching, and more interestingly, to a $(2+\eps)$-approximation of maximum weighted \emph{b-matching}. These extensions can be found in \Cref{sec:b-matching,sec:weightedmatchings}. 
\begin{theorem}\label{2+eps-approx-weighted-MaxM}
There is an $O(\log^2\Delta \cdot \log\frac{1}{\eps} + \log^*n)$-round deterministic distributed algorithm for a $(2+\eps)$-approximate maximum weighted matching, or $b$-matching, for any $\eps> 0$. 
\end{theorem}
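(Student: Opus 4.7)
The plan is to follow the two-phase template underlying Theorem~\ref{2+eps-approx-MaxM-bipartite} --- compute a near-optimal fractional solution to a packing LP, then deterministically round it with only a factor-$2$ loss --- now extended to cover weighted objectives and per-vertex capacities. Concretely, I would write the LP
\[
\max \sum_e w_e x_e \quad \text{s.t.} \quad \sum_{e \ni v} x_e \le b(v) \,\, \forall v, \quad 0 \le x_e \le 1,
\]
and invoke a deterministic distributed packing-LP solver (along the lines of Kuhn--Moscibroda--Wattenhofer) to obtain an $x^\ast$ whose weight is within a $(1+\eps/c)$-factor of the LP optimum in $O(\log^2 \Delta \cdot \log \frac{1}{\eps})$ rounds. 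The $\log^* n$ term would come from a distance-$2$ coloring preprocessing needed to schedule local operations.

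The second phase applies the deterministic rounding primitive from the proof of Theorem~\ref{2+eps-approx-MaxM-bipartite} to $x^\ast$. The $b$-matching constraints should present no real difficulty: the primitive makes local adjustments that preserve the packing constraint at every vertex, and replacing the right-hand side $1$ by $b(v)$ does not alter its decisions, only the bookkeeping. The weighted objective is handled by charging each rounding step's loss against the weighted fractional value $\sum_e w_e x_e^\ast$ rather than the cardinality $\sum_e x_e^\ast$; the argument should go through because the primitive zeros out or commits each edge's value $x_e^\ast$ in a weight-oblivious way, so the same structural $1/2$ factor from the unweighted case carries over to the weighted fractional mass. Iterating the rounding $O(\log \frac{1}{\eps})$ times drains the residual fractional mass down to an $\eps$ fraction of the optimum.

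The main obstacle will be carrying the weighted analysis through the distributed rounding cleanly: whereas the unweighted guarantee requires only that a constant fraction of edges are committed per phase, the weighted version must show that a constant fraction of the \emph{weighted} fractional mass is preserved. I would attack this with a potential-function argument in which the decrement of $\sum_e w_e x_e$ is bounded by the weighted slack incurred at saturated vertices, ensuring that high-weight edges are not systematically discarded. A secondary subtlety is the interface between the fractional solver and the rounding: $x^\ast$ must be not just approximately optimal but also sufficiently ``well-spread'' for the rounding primitive to commit mass efficiently, which I would enforce by truncating edges with $x_e^\ast$ below a $\poly(\eps/\Delta)$ threshold and redistributing that mass locally before rounding. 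Combining these ingredients yields a $(2+\eps)$-approximation within the claimed round complexity, matching the structure of Theorem~\ref{2+eps-approx-MaxM-bipartite} in the unweighted unit-capacity case.
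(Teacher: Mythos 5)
There is a genuine gap, and it sits exactly where you flag "the main obstacle": the claim that the rounding primitive of \Cref{rounding-phase} preserves a constant fraction of the \emph{weighted} fractional mass because it is weight-oblivious. Weight-obliviousness is precisely why it does \emph{not}. In the alternation step on a cycle of edges with value $2^{-i}$, one alternation class is doubled and the other is zeroed, and the choice between the two classes is dictated by the $2$-coloring/orientation, not by weight. The two classes can carry arbitrarily unbalanced weight (e.g., a $4$-cycle whose two opposite edges have weight $W$ and whose other two have weight $1$), so a single phase can destroy almost all of $\sum_e w_e x_e$ while exactly preserving $\sum_e x_e$. The same problem recurs in the short-path analysis, where the loss at a tight endpoint $v$ is charged against the unweighted quantity $c_v$, and again in the final step where a maximal matching in the constant-degree support graph only guarantees a constant fraction of the \emph{number} of surviving edges, not of their weight. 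No potential-function argument can rescue the primitive as stated; the algorithm itself must be made weight-aware. The paper's fix is different and avoids the issue entirely: it buckets edges into weight classes $E_i=\{e : w'_e = 8^i\}$ (weights rounded down to powers of $8$), runs the \emph{unweighted} algorithm of \Cref{2+eps-approx-MaxM-bipartite} on each class in parallel --- where cardinality preservation is weight preservation up to a constant --- and then resolves conflicts between classes by keeping the highest-class edge, with a geometric blaming argument (\Cref{const-approximate-MaxWM}) bounding the discarded weight by $\frac{1}{3}$ of the retained weight. The $(2+\eps)$ ratio then comes not from draining residual fractional mass but from $O(\log\frac{1}{\eps})$ rounds of augmentation on auxiliary "gain" graphs via Lemma 4.3 of Lotker et al.; your residual-mass argument also does not obviously yield $2+\eps$ for weighted matching, since removing a matched edge's neighborhood controls the residual \emph{cardinality} optimum but not the residual \emph{weight} optimum without an augmenting-gain analysis.

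Two smaller points. First, your dismissal of the $b$-matching adaptation as "only bookkeeping" is too quick: the support of the $2^{-4}$-fractional solution can have degree up to $16 b_v$ at $v$, so the constant-round maximal-matching finish breaks and the paper needs an extra decomposition into constant-degree subgraphs (\Cref{rounding-end-B}), and the bipartite-to-general reduction needs a $2$-decomposition plus a maximal-matching step to bring the degree back under $b_v$. These are fixable but must be addressed. Second, the fractional phase is not where the difficulty lies --- a simple greedy doubling already gives a $4$-approximate fractional solution --- so importing a general packing-LP solver adds machinery without closing the real gap in the rounding.
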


To the best of our knowledge, this is the first distributed deterministic algorithm for approximating maximum (weighted) $b$-matching. Moreover, even in the case of standard matching, it improves over the previously best-known algorithm: A deterministic algorithm for $(6+\eps)$-approximation of maximum weighted matching was provided by Panconesi and Sozio \cite{panconesi2010fast}, with a round complexity of $O\paren{\log^4 n \cdot \log_{1+\eps} W}$, where $W$ denotes the maximum normalized weight. However, that deterministic algorithm does not extend to $b$-matching.  
\subsection{Related Work, Randomized Distributed Matching Approximation}
\vspace{-4pt}
Aside from the deterministic algorithms discussed above, there is a long line of research on randomized distributed approximation algorithms for matching: for the unweighted case, \cite{israeli1986fast} provide a $2$-approximation in $O(\log n)$ rounds, and \cite{lotker2008improved} a $(1+\eps)$-approximation in $O(\log n)$ for any constant $\eps>0$. For the weighted case, \cite{wattenhofer2004distributed, Lotker:2007, lotker2008improved} provide successively improved algorithms, culminating in the $O(\log \frac{1}{\eps} \cdot \log n)$-round $(2+\eps)$-approximation of \cite{lotker2008improved}. Moreover, \cite{koufogiannakis2009distributed} present an $O(\log n)$-round randomized algorithm for $2$-approximate weighted $b$-matching.
 
\vspace{-3pt}

\section{Our Deterministic Rounding Method, in a Nutshell}

The main ingredient in our results is a simple deterministic method for \emph{rounding} fractional solutions to integral solutions. We believe that this \emph{deterministic distributed rounding} will be of interest well beyond this paper. 
To present the flavor of our deterministic rounding method, here we overview it in a simple special case: we describe an $O(\log^2 \Delta)$-round algorithm for a constant approximation of the maximum unweighted matching in 2-colored bipartite graphs. The precise algorithm and proof appear in \Cref{specialCaseBipartite}. 

\paragraph{Fractional Solution} First, notice that finding a fractional approximate maximum matching is straightforward. In $O(\log \Delta)$ rounds, we can compute a fractional matching $\mathbf{x}\in [0,1]^{m}$ whose total value $\sum_{e} x_{e}$ is a constant approximation of maximum matching. One standard method is as follows: start with all edge values at $x_{e} = 2^{-\lceil\log{\Delta}\rceil}$. Then, for $O(\log \Delta)$ rounds, in each round raise all edge values $x_e$ by a $2$-factor, except for those edges that are incident to a node $v$ such that $\sum_{e\in E(v)} x_e \geq 1/2$. Throughout, $E(v):=\{e \in E \colon v \in e\}$ denotes the set of edges incident to node $v$. One can easily see that this fractional matching has total value $\sum_{e} x_{e}$ within a $4$-factor of the maximum matching.

\paragraph{Gradual Rounding} We gradually round this fractional matching $\mathbf{x}\in [0,1]^{m}$ to an integral matching $\mathbf{x'}\in \{0,1\}^{m}$ while ensuring that we do not lose much of the value, i.e., $\sum_{e} x'_{e} \geq (\sum_{e} x_{e})/C$, for some constant $C$. We have $O(\log \Delta)$ rounding phases, each of which takes $O(\log \Delta)$ rounds. In each phase, we get rid of the smallest (non-zero) values and thereby move closer to integrality. The initial fractional matching has\footnote{Any fractional maximum matching can be transformed to this format, with at most a $2$-factor loss in the total value: simply round down each value to the next power of $2$, and then drop edges with values below $2^{-(\lceil\log{\Delta}\rceil+1)}$. } only values $x_{e} = 2^{-i}$ for $i\in \{0, \dotsc, \lceil\log{\Delta}\rceil\}$  or $x_e=0$. In the $k^{th}$ phase, we partially round the edge values $x_e = 2^{-i}$ for $i=\lceil\log{\Delta}\rceil - k +1$. Some of these edges will be raised to $x_{e} = 2 \cdot 2^{-i}$, while others are dropped to $x_{e}=0$. The choices are made in a way that keeps $\sum_{e} x_{e}$ essentially unchanged, as we explain next.

Consider the graph $H$ edge-induced by edges $e$ with value $x_e = 2^{-i}$. For the sake of simplicity, suppose all nodes of $H$ have even degree. Dealing with odd degrees requires some delicate care, but it will not incur a loss worse than an $O\paren{2^{-i}}$-fraction of the total value. In this even-degree graph $H$, we effectively want that for each node $v$ of $H$, half of its edges raise $x_e = 2^{-i}$ to $x_e=2\cdot 2^{-i}$ while the others drop it to $x_e =0$. For that, we generate a degree-$2$ graph $H'$ by replacing each node $v$ of $H$ with $d_{H}(v)/2$ nodes, each of which gets two of $v$'s edges\footnote{This simple idea has been used frequently before. For instance, it gives an almost trivial proof of Petersen's 2-factorization theorem from 1891 \cite{mulder1992julius}. It has also been used by \cite{israeli1986improved, HanckowiakKP98,hanckowiak1999faster}.}. Notice that the edge sets of $H'$ and $H$ are the same. Graph $H'$ is simply a set of cycles of even length, as $H$ was bipartite.

In each cycle of $H'$, we would want that the raise and drop of edge weights is alternating. That is, odd-numbered, say, edges are raised to $x_e=2\cdot 2^{-i}$ while even-numbered edges are dropped to $x_e=0$. This would keep $\mathbf{x}$ a valid fractional matching--- meaning that each node $v$ still has $\sum_{e\in E(v)} x_e \leq 1$--- because the summation $\sum_{e\in E(v)} x_e$ does not increase, for each node $v$. 
Furthermore, it would keep the total weight $\sum_{e} x_e$ unchanged. If the cycle is shorter than length $O(\log \Delta)$, this raise/drop sequence can be identified in $O(\log \Delta)$ rounds. For longer cycles, we cannot compute such a perfect alternation in $O(\log \Delta)$ rounds. However, one can do something that does not lose much\footnote{Our algorithm actually does something slightly different, but describing this ideal procedure is easier.}: imagine that we chop the longer cycles into edge-disjoint paths of length $\Theta(\log \Delta)$. In each path, we drop the endpoints to $x_e=0$ while using a perfect alternation inside the path. These border settings mean we lose $\Theta(1/\log \Delta)$-fraction of the weight. Thus, even over all the $O(\log \Delta)$ iterations, the total loss is only a small constant fraction of the total weight.

\section{Preliminaries}\label{preliminaries}
\vspace{-5pt}
\paragraph{Matching and Fractional Matching}
An integral matching $M$ is a subset of $E$ such that $e \cap e' = \emptyset$ for all $e\neq  e' \in M$. It can be seen as an assignment of values $x_e \in \{0,1\}$ to edges, where $x_e=1$ iff $e \in M$, such that $c_v:=\sum_{e \in E(v)} x_e \leq 1$ for all $v \in V$. When the condition $x_e \in \{0,1\}$ is relaxed to $0 \leq x_e \leq 1$, such an assignment is called a fractional matching. 

\paragraph{B-Matching} A $b$-matching for $b$-values $\{1 \leq b_v \leq d_G(v) \colon v \in V\}$ is an assignment of values $x_e \in \{0,1\}$ to edges $e \in E$ such that $\sum_{e \in E(v)} x_e \leq b_v$ for all $v \in V$. 
Again, one can relax this to fractional $b$-matchings by replacing $x_e \in \{0,1\}$ with $0 \leq x_e \leq 1$. 

\paragraph{Maximal and $\eps$-Maximal Matching} An integral matching is called maximal if we cannot add any edge to it without violating the constraints. For $\eps > 0$, we say that $M\subseteq E$ is an $\eps$- maximal matching if $|\Gamma^+(M)|\geq (1-\eps)|E|$ for $\Gamma^+(M):=\{e \in E \mid \exists e' \in M \colon e \cap e' \neq \emptyset\}$, that is, if after removing the edges in and incident to $M$ from $G$, at most $\eps |E|$ edges remain. 

\paragraph{Maximum and Approximate Maximum Matching} A matching $M^*$ is called maximum if it is the/a largest matching in terms of cardinality. 
For any $c > 1$, we say that a matching is $c$-approximate if $c \sum_{e\in E} x_e\geq |M^*|$ for a maximum matching $M^*$. 
In a weighted graph where each edge $e$ is assigned a weight $w_e \geq 0$, we say that $M^*$ is a maximum weighted matching if it is the/a matching with maximum weight $w(M^*):=\sum_{e \in M^*} w_e$. An integral matching $M$ is a $c$-approximate weighted matching if $  c \sum_{e \in M} w_e\geq w(M^*)$. 

\vspace{0.2cm}
We now state some simple and well-known facts about matchings.

\begin{lemma}\label{trivial-size-MM}
For a maximal matching $M$ and a maximum matching $M^*$ in $G=(V,E)$, we have the following two properties: 
(i) $|M|\geq \frac{|E|}{2 \Delta -1}$, and (ii) $\frac{|M^*|}{2} \leq |M| \leq |M^*|$.
\end{lemma}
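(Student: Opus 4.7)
\medskip
\noindent\textbf{Proof plan.} Both parts are simple counting arguments that rely only on the defining property of a maximal matching, namely that no edge of $E$ can be added to $M$ without violating the matching constraint.

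For part (i), I would use a covering argument. Maximality implies that every edge $e' \in E$ must share at least one endpoint with some edge of $M$; otherwise $M \cup \{e'\}$ would still be a matching, contradicting maximality. So I would charge each edge of $E$ to an edge of $M$ that shares an endpoint with it. It then remains to bound, for each $e=\{u,v\}\in M$, the number of edges of $E$ that share an endpoint with $e$. The edges incident to $u$ or $v$ number at most $d_G(u) + d_G(v) - 1 \leq 2\Delta - 1$ (subtracting one because $e$ itself is counted at both endpoints). Summing over $e \in M$ and using the surjectivity of the charging gives $|E| \leq (2\Delta-1)|M|$, which is exactly (i).

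For part (ii), the upper bound $|M| \leq |M^*|$ is immediate from the definition of a maximum matching, since $M$ is itself a matching. For the lower bound $|M^*| \leq 2|M|$, I would again invoke maximality of $M$: for every edge $e^* \in M^*$, there exists some edge of $M$ that shares an endpoint with $e^*$, otherwise $e^*$ could be added to $M$. Mapping each $e^* \in M^*$ to such an edge of $M$, I would then observe that each edge $e \in M$ has only two endpoints, and since $M^*$ is a matching, at most one edge of $M^*$ is incident to each endpoint of $e$. Hence at most two edges of $M^*$ can be mapped to any given $e \in M$, yielding $|M^*| \leq 2|M|$.

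There is no real obstacle here; the entire statement is elementary and the hardest part is just making the double-counting bookkeeping in (i) clean (i.e.\ remembering to subtract one for the edge $e$ itself). I expect the whole argument to take only a few lines.
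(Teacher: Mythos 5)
Your proof is correct and complete; note that the paper itself states this lemma as a well-known fact and gives no proof, so there is nothing to compare against. Both of your arguments are the standard ones: the charging in (i) with the $d_G(u)+d_G(v)-1\leq 2\Delta-1$ bound, and the "each matched edge blocks at most two edges of $M^*$" argument in (ii) are exactly what one would expect, and the bookkeeping is handled correctly.
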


\begin{lemma}[Panconesi and Rizzi \cite{panconesirizzi2000}]\label{MM-bipartite-small-degree} There is an $O(\Delta + \log^* n)$-round deterministic distributed algorithm for maximal matching. Furthermore, if a $q$-coloring of the graph is provided, then the algorithm runs in $O(\Delta + \log^* q)$ rounds. 
\end{lemma}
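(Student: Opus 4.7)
The plan is to reduce maximal matching to $(2\Delta-1)$-edge-coloring and then handle the edges color class by color class. The first step is to compute a proper edge coloring of $G$ with at most $2\Delta - 1$ colors. This can be obtained by first running Linial's algorithm on the line graph $L(G)$ (which has maximum degree at most $2\Delta - 2$) to produce an $O(\Delta^2)$-edge-coloring in $O(\log^* n)$ rounds, and then reducing the palette to $2\Delta - 1$ colors using the standard iterative recoloring in which the edges of each color in turn locally pick a free color from $\{1,\dots,2\Delta-1\}$; since each edge has at most $2\Delta - 2$ conflicting neighbors in $L(G)$, a free color always exists, and processing all initial color classes one by one takes $O(\Delta)$ rounds. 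If instead a $q$-coloring is already provided, the initial Linial step can be simulated on the provided input in only $O(\log^* q)$ rounds, giving the second bound.

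Once a $(2\Delta-1)$-edge-coloring is in hand, the maximal matching is built greedily by color. Initialize $M \gets \emptyset$ and, for $i = 1, 2, \dots, 2\Delta - 1$ in sequence, let every edge $e$ of color $i$ whose two endpoints are not yet incident to any edge of $M$ add itself to $M$; since edges of the same color share no endpoint, all such additions are conflict-free and can be executed in a single round per color class. Thus this phase contributes $O(\Delta)$ additional rounds.

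Correctness is immediate: $M$ is a matching throughout because no two edges added at the same color can share an endpoint and the eligibility test forbids conflicts across different colors. Moreover, when the procedure terminates, every edge $e \in E$ has been considered at its color step, and either it joined $M$ or at least one of its endpoints was already matched at that moment; hence no edge can be added to $M$ afterwards, so $M$ is maximal. Summing the coloring cost and the $(2\Delta-1)$-round greedy phase yields the claimed $O(\Delta + \log^* n)$, respectively $O(\Delta + \log^* q)$, round complexities.

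The main obstacle is really packaged in the first paragraph: producing a $(2\Delta-1)$-edge-coloring deterministically in $O(\Delta + \log^* n)$ rounds. The $\log^* n$ term is unavoidable and arises from the Linial-style symmetry breaking, while the $\Delta$ term comes from the palette-reduction step, where one must argue carefully that the greedy local recoloring always succeeds because each edge has strictly fewer colored neighbors than available colors. Everything after the coloring is purely combinatorial and trivial to execute in the \local model.
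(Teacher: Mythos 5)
This lemma is imported verbatim from Panconesi and Rizzi \cite{panconesirizzi2000} and used as a black box; the paper gives no proof of it, so your attempt can only be judged on its own merits.

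Your overall plan (compute a $(2\Delta-1)$-edge-coloring, then build the matching greedily one color class at a time) is a legitimate route, and the second half of the argument --- the greedy insertion phase and the maximality argument --- is correct and costs $O(\Delta)$ rounds. The gap is in the coloring step. After Linial's algorithm gives an $O(\Delta^2)$-edge-coloring, your palette reduction processes the initial color classes \emph{one by one}, and each such round eliminates exactly one class. Going from $\Theta(\Delta^2)$ classes down to $2\Delta-1$ therefore takes $\Theta(\Delta^2)$ rounds, not the $O(\Delta)$ you claim, so the algorithm as described runs in $O(\Delta^2+\log^* n)$ rounds and does not establish the lemma. Getting to $O(\Delta+\log^* n)$ genuinely requires an extra idea. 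The one used by Panconesi and Rizzi is a forest decomposition: use an $O(\Delta^2)$ \emph{vertex} coloring to orient the edges acyclically, split each node's out-edges into $\Delta$ rooted forests, $3$-color all forests in parallel in $O(\log^* n)$ rounds, and then process the forests sequentially, spending only $O(1)$ rounds per forest by using the $3$-coloring to serialize the proposals along tree edges. This replaces your $\Theta(\Delta^2)$ sequential color classes by $\Delta$ sequential forests, each handled in constant time, and yields the claimed $O(\Delta+\log^* n)$, respectively $O(\Delta+\log^* q)$, bounds.
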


Many problems are easier in small-degree graphs. To exploit this fact, we sometimes use the following simple transformation which decomposes a graph into graphs with maximum degree 2 --- that is, node-disjoint paths and cycles --- with the same edge set, in zero rounds. As mentioned before, this has been used frequently in prior work \cite{mulder1992julius, israeli1986improved, HanckowiakKP98,hanckowiak1999faster}.

\begin{figure}[t]
\centering
\includegraphics[width=0.4\textwidth]{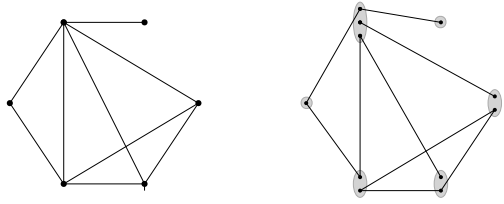}
\caption{A graph and its 2-decomposition.}
\label{fig:2decomp}
\end{figure}

\paragraph{2-decomposition} We \emph{$2$-decompose} graph $G$ as follows. For every node $v \in V$, introduce $\ceil{\frac{d_G(v)}{2}}$ copies and arbitrarily split its incident edges among these copies in such a way that every copy has degree 2, with the possible exception of one copy which has degree $1$ (when $v$ has odd degree). The graph on these copy nodes is what we call a $2$-decomposition of $G$. See \Cref{fig:2decomp} for an example.

\section{Approximate Maximum Matching}\label{sec:approx-MaxM}
We present a $(2+\eps)$-approximation algorithm for maximum matching, proving \Cref{2+eps-approx-MaxM-bipartite}. The first step towards this goal is finding a constant approximation, explained in \Cref{sec:bipartite}. We show in \Cref{wrapup} how to further improve this approximation ratio to $2+\eps$. 

\subsection{Constant Approximate Maximum Matching}\label{sec:bipartite}
In this subsection, we show how to compute a constant approximation. 

\begin{lemma}\label{const-approx-MaxM}
There is an $O(\log^2\Delta+\log^*n)$-round deterministic distributed algorithm for a $c$-approximate maximum matching, for some constant $c$.
\end{lemma}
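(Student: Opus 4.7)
The plan is to carry out the blueprint sketched in Section 2, with extra care for general (non-bipartite) graphs and for a clean $\log^* n$ accounting. First I would compute an initial fractional matching $\mathbf{x}$ in $O(\log \Delta)$ rounds via the doubling procedure: set $x_e = 2^{-\lceil \log \Delta \rceil}$ on every edge, and in each of the next $O(\log \Delta)$ rounds double every $x_e$ unless some endpoint $v$ already has $\sum_{e' \in E(v)} x_{e'} > 1/2$. A local charging argument shows that the final $\mathbf{x}$ is a valid fractional matching with $\sum_e x_e = \Omega(|M_f^*|)$, where $M_f^*$ is a maximum fractional matching; since $|M_f^*| \geq |M^*|$, this already gives a constant-factor approximation in fractional value. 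I also round each nonzero $x_e$ down to the nearest power of two in $\{2^{-i} : 0 \leq i \leq \lceil \log \Delta \rceil\}$, losing at most another factor of $2$.

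The heart of the proof is the rounding, carried out in $\lceil \log \Delta \rceil + 1$ phases, one per scale $\alpha = 2^{-i}$, from smallest to largest. In the $k$-th phase let $H_k$ be the subgraph edge-induced by edges with $x_e = \alpha$. The goal of the phase is to promote roughly half of each node's $H_k$-edges to value $2\alpha$ and zero out the other half, so that after the phase no edge has value $\alpha$ and the total mass $\sum_e x_e$ drops by at most an $O(1/\log \Delta)$ fraction. Concretely, I first delete at most one $H_k$-edge per odd-degree node of $H_k$ to obtain an even-degree subgraph $\tilde H_k$ (a loss of at most $\alpha$ per such node, which can be charged against the node's surviving $c_v \leq 1$); then I $2$-decompose $\tilde H_k$ into a disjoint collection $\tilde H_k'$ of paths and cycles of maximum degree $2$. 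On each component of length at most $c \log \Delta$, I compute a perfectly alternating promote/zero pattern in $O(\log \Delta)$ rounds, which is exactly mass-preserving and maintains validity of the fractional matching. Longer components are chopped into edge-disjoint pieces of length $\Theta(\log \Delta)$; inside each piece I again use a perfect alternation, and the two boundary edges of each piece are zeroed, costing only an $O(1/\log \Delta)$ fractional loss on that piece.

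The main obstacle is combining the per-phase loss accounting with the symmetry breaking needed to place the chops, while paying the $\log^* n$ term only once. For the accounting, each phase loses only an $O(1/\log \Delta)$ multiplicative fraction of the current mass (from odd-degree fixups and boundary chops), so after all $O(\log \Delta)$ phases the surviving mass is at least $(1 - O(1/\log \Delta))^{O(\log \Delta)} = \Omega(1)$ times the initial fractional mass, yielding an integral matching of size $\Omega(|M^*|)$. For the symmetry breaking, I pay $\log^* n$ upfront by using Linial's algorithm to compute a $\poly(\Delta)$-vertex-coloring of $G$ in $O(\log^* n)$ rounds; this coloring is inherited by every $\tilde H_k'$, and since each $\tilde H_k'$ has maximum degree $\leq 2$, it can be reduced to an $O(1)$-coloring of $\tilde H_k'$ in $O(\log^* \Delta) = O(\log \Delta)$ further rounds per phase, which in turn lets the $\Theta(\log \Delta)$-spaced chops and the promote/zero alternation be identified locally. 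Summing over the $O(\log \Delta)$ phases gives the claimed $O(\log^2 \Delta + \log^* n)$ round complexity, with an approximation ratio that is an absolute constant $c$.
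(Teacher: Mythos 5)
Your overall architecture (greedy doubling for the fractional matching, scale-by-scale rounding via a $2$-decomposition into paths and cycles, alternation with chopped long components, Linial's coloring paid once up front) matches the paper's, but there are genuine gaps at exactly the two places where the paper has to be most careful. The central one is your odd-degree fixup. Deleting ``at most one $H_k$-edge per odd-degree node'' does not in general produce an even-degree subgraph (on a path $a\!-\!b\!-\!c\!-\!d$, deleting the two end edges leaves $b$ and $c$ with odd degree), and the loss accounting fails even where the operation makes sense: if $H_k$ is a star or a perfect matching at scale $2^{-i}$, every node has odd degree and your rule deletes essentially all of the mass at that scale, while the charge ``against the node's surviving $c_v\le 1$'' gives no bound, since a loose node can have $c_v$ as small as $2^{-i}$, so the deleted $2^{-i}$ is that node's entire contribution. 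The paper never makes $H$ even-degree: odd-degree nodes simply acquire a degree-$1$ copy in the $2$-decomposition and become path endpoints, and the endpoint edge is \emph{raised} when the endpoint is loose and dropped only when it is tight, in which case the loss $2^{-i+1}$ is at most a $2^{-i+2}$-fraction of $c_v\ge 1/2$; summing the resulting $2^{-i+3}$-fractions over $i\ge 5$ gives a convergent geometric series. This loose/tight case distinction at path endpoints is the crux of the per-phase loss bound and is absent from your argument.

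Two further issues. First, you run the rounding directly on the general graph, where the $2$-decomposition can contain \emph{odd} cycles, on which no perfectly alternating promote/zero pattern exists; the paper first reduces to a $2$-colored bipartite graph by splitting each node into an in-copy and an out-copy (so all cycles are even, and the $2$-coloring also breaks symmetry on long components for free), and only afterwards returns to $G$ by taking a maximal matching in the degree-$2$ graph formed by merging the copies back, which is where the single $O(\log^* n)$ term enters. Second, you cannot run the phases all the way down to scale $2^{0}$: the per-phase endpoint loss is a $2^{-i+3}$-fraction of the current value, which is no longer small for $i\le 4$, so the product of retention factors degenerates. The paper stops at a $2^{-4}$-fractional matching and finishes with an $O(1)$-round maximal matching on the resulting degree-$\le 16$ subgraph. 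This last point is easy to repair; the odd-degree accounting and the odd-cycle issue are not, short of reintroducing the paper's bipartite reduction and its loose/tight endpoint analysis.
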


The key ingredient for our $c$-approximation algorithm of \Cref{const-approx-MaxM} is a distributed algorithm that computes a constant approximate maximum matching in the special case of a $2$-colored bipartite graph. We first present the algorithm for this special case in \Cref{specialCaseBipartite}, and then explain in \Cref{sec:general} how to reduce the general graph case to the bipartite case, hence proving \Cref{const-approx-MaxM}. 

\subsubsection{Constant Approximate Maximum Matching in Bipartite Graphs}\label{specialCaseBipartite}
Next, we show how to find a $c$-approximate matching in a 2-colored bipartite graph. 
\begin{lemma}\label{const-approx-MaxM-bipartite}
There is an $O(\log^2\Delta)$-round deterministic distributed algorithm for a $c$-approximate maximum matching in a 2-colored bipartite graph, for some constant $c$.
\end{lemma}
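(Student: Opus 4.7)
The plan follows the two-stage template sketched in Section 2: first build a fractional matching that is already a constant-factor approximation, then deterministically round it to an integral matching while losing only a constant factor in total weight. Each stage runs in $O(\log \Delta)$ subphases of $O(\log \Delta)$ rounds, for a total of $O(\log^2 \Delta)$ rounds.

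\emph{Fractional stage.} I initialize $x_e = 2^{-\lceil \log \Delta \rceil}$ on every edge. For $\lceil \log \Delta \rceil$ rounds, I double $x_e$ on every edge $e=\{u,v\}$ whose both endpoints still satisfy $\sum_{e' \in E(u)} x_{e'} < 1/2$ and $\sum_{e' \in E(v)} x_{e'} < 1/2$. The invariant $\sum_{e \in E(v)} x_e \leq 1$ is maintained throughout, so $\mathbf{x}$ remains a feasible fractional matching. A standard charging argument then shows $\sum_e x_e = \Omega(\nu(G))$: every edge of a maximum matching $M^*$ has an endpoint $v$ whose terminal load is $\Omega(1)$, otherwise the doubling would have continued, and each such $v$ is charged by at most one edge of $M^*$. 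Rounding each $x_e$ down to the nearest power of $2$ and zeroing values below $2^{-(\lceil \log \Delta \rceil + 1)}$ costs only a constant factor and guarantees $x_e \in \{0\} \cup \{2^{-i} : 0 \leq i \leq \lceil \log \Delta \rceil\}$.

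\emph{Rounding stage.} I process the levels $i = \lceil \log \Delta \rceil, \ldots, 0$ in sequence. In the subphase for level $i$, let $H_i$ be the subgraph induced by edges with $x_e = 2^{-i}$. Applying the 2-decomposition from the Preliminaries to $H_i$ yields a max-degree-$2$ graph $H_i'$ with the same edge set, i.e., a disjoint union of paths and cycles; since the host graph is bipartite, all cycles have even length. I then aim to 2-edge-color the edges of $H_i'$ into a raise set $R$ (on which $x_e \leftarrow 2 \cdot 2^{-i}$) and a drop set $D$ (on which $x_e \leftarrow 0$) so that the labels alternate along each component. Such an alternation exactly preserves, at every node of $H_i'$, the sum of incident level-$i$ weights, hence feasibility and the total level-$i$ weight. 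On each component of length $O(\log \Delta)$, the alternation is computed in $O(\log \Delta)$ rounds by gathering the component at a canonical endpoint chosen by IDs. Longer cycles are handled by deterministically marking cut edges at spacing $\Theta(\log \Delta)$ using the supplied 2-coloring together with unique IDs for local symmetry breaking, dropping the cut edges, and alternating perfectly on each resulting short path.

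The main obstacle is the cumulative loss accounting across the $\Theta(\log \Delta)$ subphases, together with two corner cases. Each cut on a long cycle costs $O(2^{-i})$ of weight at level $i$, so the relative loss per subphase is $O(1/\log \Delta)$; multiplying the per-subphase retention $1 - O(1/\log \Delta)$ over $O(\log \Delta)$ subphases yields $\Omega(1)$. Odd-degree vertices in $H_i$ create length-$1$ stubs in $H_i'$, which I absorb into the nearest short path with an additional $O(2^{-i})$-fraction loss that telescopes across subphases into a further constant factor. Finally, each subphase only \emph{decreases} the number of distinct level values present, so after the final subphase $\mathbf{x}$ is integral. Combining the constant loss of the fractional stage with the constant loss of the rounding stage produces a $c$-approximate integral matching for some absolute constant $c$, in $O(\log \Delta) + O(\log \Delta) \cdot O(\log \Delta) = O(\log^2 \Delta)$ rounds total.
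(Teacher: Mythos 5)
Your architecture matches the paper's (greedy doubling to get a constant-approximate fractional matching, then level-by-level rounding through a $2$-decomposition into paths and even cycles), but there are two genuine gaps at the heart of the rounding stage, both concerning path endpoints, i.e., the degree-$1$ copies created by odd-degree vertices of $H_i$. First, your claim that an alternating raise/drop ``exactly preserves, at every node of $H_i'$, the sum of incident level-$i$ weights'' is false at a degree-$1$ copy: its single edge is either doubled or zeroed, so $c_v$ can increase by $2^{-i}$, and if $v$ is already saturated ($c_v$ close to $1$) this violates feasibility. The paper's fix is to raise the endpoint edge only when the endpoint is \emph{loose} ($c_v\le 1/2$), so that the at-most-$2$-factor increase keeps $c_v\le 1$. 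Second, and more seriously, your loss accounting does not close if you iterate down to $i=0$. When a tight endpoint forces a drop, the lost $\Theta(2^{-i})$ can only be charged against $c_v\ge 1/2$, giving a per-level loss that is a $\Theta(2^{-i})$-\emph{fraction} of the total value; the retention product $\prod_i\bigl(1-\Theta(2^{-i})\bigr)$ has factors that reach $0$ (or go negative) for the last constant number of levels, so the claim that the stub losses ``telescope into a further constant factor'' fails exactly where integrality is reached. This is why the paper stops the rounding at $2^{-4}$-fractionality (levels $i\ge 5$ only) and finishes differently: the support of a $2^{-4}$-fractional matching has maximum degree $16$, so a maximal matching there is computed in $O(1)$ rounds via Panconesi--Rizzi using the given $2$-coloring, losing only another constant factor by the bound $|M|\ge |E|/(2\Delta-1)$. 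You need some such alternative mechanism for the final $O(1)$ levels; the level-by-level charging alone cannot deliver it.

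A third issue is the handling of long cycles. Deterministically ``marking cut edges at spacing $\Theta(\log\Delta)$ using the supplied $2$-coloring together with unique IDs'' is a ruling-set computation on a cycle: with IDs it is subject to Linial's $\Omega(\log^* n)$ lower bound, and the $2$-coloring is useless there since it merely alternates around an even cycle. As described, this step either costs $\Omega(\log^* n)$ per level (breaking the claimed $O(\log^2\Delta)$ bound, which has no $n$-dependence at all) or is unjustified. The paper avoids IDs entirely: it orients the edges arbitrarily and, for $\lceil\log\ell\rceil$ iterations, merges head-on directed paths by reversing the shorter one, which in $O(\ell)=O(\log\Delta)$ rounds guarantees every maximal directed path has length at least $\ell$; the border edges (where orientations clash) are then dropped, costing only a $3/\ell$-fraction per level. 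You should adopt that orientation trick (or an equivalent ID-free device) for the long components.
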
 

\paragraph{Roadmap}  The proof of \Cref{const-approx-MaxM-bipartite} is split into three parts. In the first step, explained in \Cref{fractional}, we compute a $2^{-\ceil{\log\Delta}}$-fractional 4-approximate maximum matching in $O(\log\Delta)$ rounds. The second step, which is also the main step of our method and is formalized in \Cref{rounding-phase}, is a method to round these fractional values to almost integrality in $O(\log^2\Delta)$ rounds. In the third step, presented in \Cref{rounding-end}, we resort to a simple constant-round algorithm to transform the almost integral matching that we have found up to this step into an integral matching. As a side remark, we note that we explicitly state some of the constants in this part of the paper, for the sake of readability. We remark that these constants are not the focus of this work, and we have not tried to optimize them. 

We start with some helpful definitions.

\begin{definition}[{Loose and tight nodes and edges}] Given a fractional matching, we call a node $v$ \emph{loose} if $c_v =\sum_{e \in E(v)} x_e \leq  \frac{1}{2}$, and \emph{tight} otherwise, where $E(v):=\{e \in E \colon v \in e\}$. We call an edge \emph{loose} if both of its endpoints are loose; otherwise, the edge is called \emph{tight}.
\end{definition}

\begin{definition}[{The fractionality of a fractional matching}] We call a fractional matching $2^{-i}$-fractional for an $i \in \mathbb{N}$ if $x_e \in \{0\} \bigcup \set{ 2^{-j} \colon 0 \leq j \leq i}$. Notice that a $2^{-0}$-fractional matching is simply an integral matching. 
\end{definition}

\paragraph{Step 1, Fractional Matching} We show that a simple greedy algorithm already leads to a fractional 4-approximate maximum matching.

\begin{lemma}\label{fractional}
There is an $O(\log\Delta)$-round deterministic distributed algorithm for a $2^{-\ceil{\log \Delta}}$-fractional $4$-approximate maximum matching. 
\end{lemma}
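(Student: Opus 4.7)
The plan is to follow the greedy fractional matching algorithm sketched in Section 2 of the paper and verify that it delivers all three promised properties: the running time, the prescribed fractionality, and the $4$-approximation guarantee. Specifically, I would initialize $x_e \gets 2^{-\lceil\log\Delta\rceil}$ for every edge $e$, and then run $\lceil\log\Delta\rceil$ synchronous rounds in which each edge $e=\{u,v\}$ doubles its value, $x_e \gets 2 x_e$, exactly when both endpoints are currently loose (i.e., $c_u < 1/2$ and $c_v < 1/2$). Each round is clearly implementable in a single communication round in which every node sends its current load $c_v$ to its neighbors, so the total round complexity is $O(\log\Delta)$ as required.

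Next I would check the structural invariants. Fractionality is immediate: every $x_e$ is either the initial value $2^{-\lceil\log\Delta\rceil}$ or obtained from it by a sequence of doublings, so it always lies in $\{2^{-j} : 0 \le j \le \lceil\log\Delta\rceil\}$. Feasibility of the fractional matching is a one-line argument: just before $x_e$ is doubled, both endpoints of $e$ satisfy $c_v < 1/2$, so $x_e \le 1/2$, and after doubling $x_e \le 1$ and $c_v$ grows to at most $2 c_v < 1$. Hence $0 \le x_e \le 1$ and $c_v \le 1$ hold throughout.

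For the approximation ratio, the key claim is that after $\lceil\log\Delta\rceil$ rounds every edge $e$ has at least one tight endpoint. Otherwise both endpoints of $e$ would have been loose at every round, so $x_e$ would have doubled every round to reach $2^{-\lceil\log\Delta\rceil}\cdot 2^{\lceil\log\Delta\rceil} = 1$, forcing both endpoints to be tight, a contradiction. Consequently the set $T$ of tight nodes is a vertex cover, and since any vertex cover has size at least $|M^*|$ for a maximum matching $M^*$, we have $|T| \ge |M^*|$. The total value of the fractional matching satisfies
\[
\sum_{e\in E} x_e \;=\; \tfrac{1}{2}\sum_{v\in V} c_v \;\ge\; \tfrac{1}{2}\sum_{v\in T} c_v \;\ge\; \tfrac{1}{2}\cdot |T|\cdot \tfrac{1}{2} \;\ge\; \tfrac{|M^*|}{4},
\]
which is precisely the claimed $4$-approximation.

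I do not foresee a real obstacle here; the only subtle step is the ``every edge becomes tight-incident'' argument, and even that collapses to the observation above. The proof is essentially bookkeeping: initialization, per-round invariant maintenance, and one application of the trivial vertex-cover lower bound on $|M^*|$.
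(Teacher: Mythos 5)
Your proposal is correct and follows essentially the same route as the paper: initialize all edges at $2^{-\ceil{\log\Delta}}$, double loose edges in parallel for $O(\log\Delta)$ rounds, and then bound $\sum_e x_e = \frac{1}{2}\sum_v c_v$ against $|M^*|$ using the fact that every edge ends with a tight endpoint. The only cosmetic difference is that you route the final bound through an explicit vertex-cover argument ($|T|\geq|M^*|$), whereas the paper sums $c_u+c_v$ directly over the edges of $M^*$; these are equivalent.
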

\begin{proof}
Initially, set $x_e =2^{-\ceil{\log \Delta}}$ for all $e \in E$. This trivially satisfies the constraints $c_v =\sum_{e \in E(v)} x_e \leq 1$. Then, we iteratively raise the value of all loose edges in parallel by a $2$-factor. This can be done in $O(\log \Delta)$ rounds, since at the latest when the value of an edge is $1/2$, both endpoints would be tight. Once all edges are tight, for a maximum matching $M^*$ we have $\sum_{e\in E} x_e=\frac{1}{2}\sum_{v \in V} c_v \geq \frac{1}{2}\sum_{e=\{u,v\} \in M^*} (c_u+ c_v) >\frac{|M^*|}{4}$ .
\end{proof}

\paragraph{Step 2, Main Rounding} The heart of our approach, the Rounding Lemma, is a method that successively turns a $2^{-i}$-fractional matching into a $2^{-i+1}$-fractional one, for decreasing values of $i$, while only sacrificing the approximation ratio by a little. 

\begin{lemma}[\textbf{Rounding Lemma}]\label{rounding-phase}
There is an $O\paren{\log^2 \Delta}$-round deterministic distributed algorithm that transforms a $2^{-\ceil{\log\Delta}}$-fractional $4$-approximate maximum matching in a $2$-colored bipartite graph into a $2^{-4}$-fractional $14$-approximate maximum matching.\end{lemma}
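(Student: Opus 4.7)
The plan is to shrink the fractionality in $\ceil{\log\Delta}-4$ phases, each taking $O(\log\Delta)$ rounds, so that the $k$-th phase converts a $2^{-j}$-fractional matching into a $2^{-j+1}$-fractional one (with $j=\ceil{\log\Delta}-k+1$) while losing at most an $O(1/\log\Delta)$-fraction of the current fractional weight. Composing $O(\log\Delta)$ such phases degrades $\sum_e x_e$ by only a constant multiplicative factor, so the initial $4$-approximation from \Cref{fractional} becomes a $14$-approximation once constants are tuned, and the final matching is $2^{-4}$-fractional.

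A single phase at level $j$ targets the subgraph $H$ induced by the edges with $x_e = 2^{-j}$: every such edge is to be either raised to $2^{-j+1}$ or dropped to $0$, in a pattern that raises and drops equally many $H$-edges at every node, thereby preserving each load $c_v$ and the total weight up to boundary effects. To localize the decision, I first $2$-decompose $H$ by arbitrarily pairing the incident $H$-edges at each node (handling odd-degree vertices via a careful pairing whose amortized cost over all phases is an $O(1)$-fraction of the value). Because $G$ is bipartite with a given $2$-coloring, the $2$-decomposition consists of paths and even cycles with alternating node colors, and the desired raise/drop assignment is exactly a proper $2$-edge-coloring on each component.

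Components of length $O(\log\Delta)$ are handled directly: each node gathers the whole component in $O(\log\Delta)$ rounds and computes the $2$-edge-coloring locally, incurring zero loss on cycles and only an additive $2^{-j}$ loss at each path endpoint. For longer cycles, no globally consistent alternation is computable in $O(\log\Delta)$ rounds; instead I deterministically mark a set of cut edges at cycle-distance $\Theta(\log\Delta)$ from one another, using each node's $O(\log\Delta)$-radius view together with the given $2$-coloring and unique IDs to break symmetry. Each segment between consecutive cuts is a short path on which we drop the two endpoint edges and run a perfect alternation inside, losing only an $O(1/\log\Delta)$-fraction of the long cycle's weight, matching the target.

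The main obstacle I expect is this cut-placement step: we must deterministically and evenly partition each long cycle into segments of length $\Theta(\log\Delta)$ in $O(\log\Delta)$ rounds with no $\log^* n$ additive term. The assumed proper $2$-coloring of $G$ is the crucial asset here --- it hands us a canonical alternating structure along every cycle of the $2$-decomposition, against which a purely ID-based local rule can select cuts from each node's $O(\log\Delta)$-radius view without any further coloring reduction. Once that step is in place, the remaining work is geometric-series bookkeeping: summing the $O(1/\log\Delta)$ losses over $O(\log\Delta)$ phases, together with the amortized $O(2^{-j})$ losses from odd-degree handling, stays below a constant fraction, yielding the claimed $14$-approximation at the final $2^{-4}$ fractionality.
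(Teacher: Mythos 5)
Your high-level architecture matches the paper's: $\ceil{\log\Delta}-4$ phases, a $2$-decomposition of the level-$j$ subgraph into paths and even cycles, exact alternation on short components, and an $O(1/\log\Delta)$ loss on long ones. The genuine gap is exactly the step you flag as the main obstacle, and your proposed resolution does not work. Deterministically selecting cut edges on a long cycle so that consecutive cuts are $\Theta(\log\Delta)$ apart is a symmetry-breaking task of ruling-set type; by Linial's lower bound it requires $\Omega(\log^* n)$ rounds, and no ``purely ID-based local rule'' over an $O(\log\Delta)$-radius view can guarantee that every window of length $O(\log\Delta)$ contains a cut (consider ID sequences that are locally monotone over stretches much longer than $\log\Delta$). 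The $2$-coloring does not rescue this: on an even cycle of the $2$-decomposition the coloring is periodic with period two, hence invariant under shifting along the cycle, and provides no handle for placing cuts. The paper sidesteps the need for an \emph{upper} bound on segment length altogether. It computes, in $O(\ell)$ rounds with $\ell=12\log\Delta$, an orientation in which every maximal directed path has length \emph{at least} $\ell$ (by repeatedly merging directed paths pointing at each other, reversing the shorter one); maximal directed paths may then be arbitrarily long, but that is harmless because the raise/drop decision for a non-border edge is made purely locally from the orientation together with the $2$-coloring (raise iff the edge points at a node of color $1$), which yields a perfect alternation inside each directed path without any edge needing to know its distance to a boundary. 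Only the $O(1)$ border edges per maximal directed path are zeroed, giving the $3/\ell$ loss fraction.

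A second, smaller gap: you charge an unconditional additive $2^{-j}$ loss per short-path endpoint. A node of the level-$j$ subgraph may contribute only about $2^{-j}$ to $\sum_{e} x_e$ yet still be a path endpoint, so this unconditional loss can amount to a constant fraction of the total weight in a single phase, which compounded over $\Theta(\log\Delta)$ phases is fatal. The paper's fix is to \emph{raise} the boundary edge at a \emph{loose} endpoint ($c_v\leq\frac12$), which is safe and loses nothing, and to drop it only at \emph{tight} endpoints, where the loss $2^{-j+1}$ can be charged against $c_v\geq\frac12$ as a $2^{-j+2}$-fraction of $c_v$; combined with the observation that each node has at most one degree-$1$ copy in the $2$-decomposition (hence is a path endpoint at most once per phase), the per-phase loss is at most $2^{-j+3}\sum_{e} x_e$, a geometric series over the phases. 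Both devices are needed for the bookkeeping to close at a constant overall loss.
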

\begin{proof}
Iteratively, for $k=1, \dotsc, \ceil{\log \Delta}-4$, in phase $k$, we get rid of edges $e$ with value $x_e=2^{-i}$ for $i=\ceil{\log \Delta}-k+1$ by either increasing their values by a 2-factor to $x_{e}=2^{-i+1}$ or setting them to $x_e=0$. In the following, we describe the process for one phase $k$, thus a fixed $i$. 

Let $H$ be the graph induced by the set $E_i:=\{ e \in E \colon x_e = 2^{-i}\}$ of edges with value $2^{-i}$ and use $H'$ to denote its 2-decomposition. Notice that $H'$ is a node-disjoint union of paths and even-length cycles. Set $\ell=12\log \Delta$. We call a path/cycle \emph{short} if it has length at most $\ell$, and \emph{long} otherwise. We now process short and long cycles and paths, by distinguishing three cases, as we discuss next. Each of these cases will be done in $O(\log \Delta)$ rounds, which implies that the complexity of one phase is $O(\log \Delta)$. Thus, over all the $O(\log \Delta)$ phases, this rounding algorithm takes $O(\log^2 \Delta)$ rounds.

\paragraph{Case A, Short Cycles}
Alternately set the values of the edges to 0 and to $2^{-i+1}$. Since the cycle has even length, the values $c_v=\sum_{e \in E(v)} x_e$ for all nodes $v$ in the cycle remain unaffected by this update. Moreover, the total value of the edges in the cycle stays the same. 

\paragraph{Case B, Long Cycles and Long Paths}
We first orient the edges in a manner that ensures that each maximal directed path has length at least $\ell$. This is done in $O(\ell)$ rounds. For that purpose, we start with an arbitrary orientation of the edges. Then, for each $j=1, \dotsc, \ceil{\log \ell}$, we iteratively merge two (maximal) directed paths of length $<2^j$ that are directed towards each other by reversing the shorter one, breaking ties arbitrarily. 
For more details of this orientation step, we refer to \cite[Fact 5.2]{HanckowiakKaronskiPanconesi2001MM}.

Given this orientation, we determine the new values of $x_e$ as follows. Recall that we are given a $2$-coloring of nodes. Set the value of all border edges (that is, edges that have an incident edge such that they are either oriented towards each other or away from each other) to 0, increase the value of a non-border edge to $2^{-i+1}$ if it is oriented towards a node of color 1, say, and set it to 0 otherwise.

\begin{figure}[t]
\centering
\includegraphics[width=.9\textwidth]{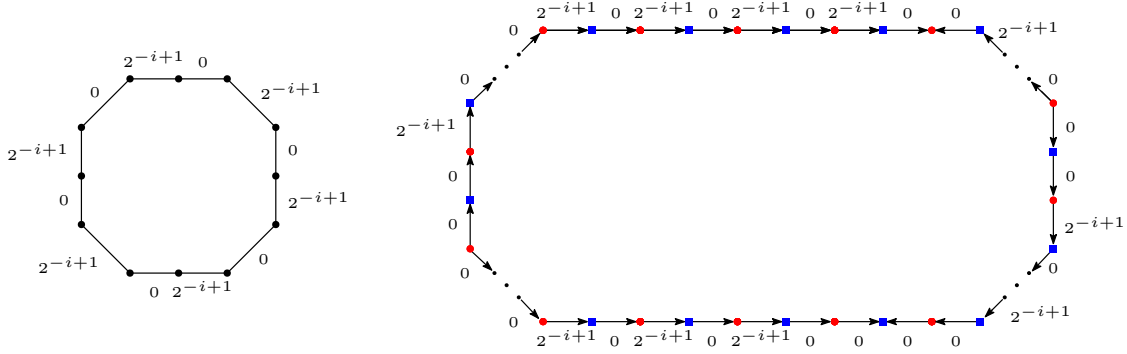}
\caption{The edge values of a short and a long cycle induced by edges in $E_i$ after rounding. In the long cycle, nodes of color 1 are depicted as blue squares and nodes of color 2 as red circles.}
\label{fig:roundingLong}
\end{figure}

\begin{figure}[h]
\centering
\includegraphics[width=0.4\textwidth]{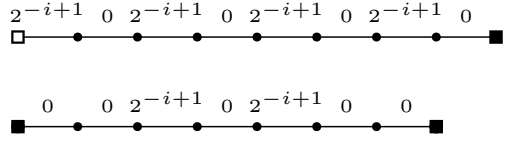}
\caption{The edge values of two short paths induced by edges in $E_i$ after rounding. Tight endpoints are depicted as (unfilled) boxes and loose endpoints as (filled) squares.}
\label{fig:roundingShort}
\end{figure}

Now, we show that this process generates a valid fractional matching while incurring only a small loss in the value. Observe that no constraint is violated, as for each node the value of at most one incident edge can be raised to $2^{-i+1}$ while the other is dropped to 0. 
Moreover, in each maximal directed path, we can lose at most $3\cdot 2^{-i}$ in the total sum of edge values. This happens in the case of an odd-length path starting with a node of color 2. Hence, we lose at most a $\frac{3}{\ell}$-fraction of the total sum of the edge values in long cycles and long paths.

\vspace{-10pt}
\paragraph{Case C, Short Paths}
Give the path an arbitrary direction, that is, identify the first and the last node. Set the value of the first edge to $2^{-i+1}$ if the first node is loose, and to $0$ otherwise. Alternately, starting with value $0$ for the second edge, set the value of every even edge to 0 and of every odd edge to $2^{-i+1}$. If the last edge should be set to $2^{-i+1}$ (that is, the path has odd length) but the last node is tight, set the value of that last edge to 0 instead. 

If a node $v$ is in the interior of the path, that is, not one of the endpoints, then $v$ can have at most one of its incident edges increased to $2^{-i+1}$ while the other one decreases to 0. Hence the summation $c_v= \sum_{e\in E(v)}x_e$ does not increase. If $v$ is the first or last node in the path, the value of the edge incident to $v$ is increased only if $v$ was loose, i.e., if $c_v= \sum_{e\in E(v)}x_e \leq \frac{1}{2}$. In this case, we still have $c_v\leq 1$ after the increase, as the value of the edge raises by at most a $2$-factor.

We now argue that the value of the matching has not decreased by too much during this update. For that, we group the edges into blocks of two consecutive edges, starting from the first edge. If the path has odd length, the last block consists of a single edge. The block value, that is, the sum of the values of its two edges, of every interior (neither first nor last) block is unaffected. If an endpoint $v$ of a path is loose, the value of the block containing $v$ remains unchanged or increases (in the case of an odd-length path ending in $v$). If $v$ is tight, then the value of its block stays the same or decreases by $2^{-i+1}$, which is at most a $2^{-i+2}$-fraction of the value $c_v$. This allows us to bound the loss in terms of these tight endpoints. The crucial observation is that every node can be endpoint of a short path at most once. This is because, in the 2-decomposition, a node can be the endpoint of a path only if it has a degree-1 copy, which happens only for odd-degree vertices and then exactly once. Thus, we lose at most a $2^{-i+2}$-fraction in $\sum_{v \in V} c_v$ when updating the values in short paths.

\paragraph{Analyzing the Overall Effect of Rounding}
First, we show that over all the rounding phases, the overall loss is only a constant fraction of the total value $\sum_{e\in E} x_e$. 
Let $x_e^{(i)}$ and $c_v^{(i)}$ denote the value of edge $e$ and node $v$, respectively, before eliminating all the edges with value $2^{-i}$. Putting together the loss analyses discussed above, we get
\begin{equation*}\begin{aligned}\sum_{e \in E} x_e^{(i-1)} &\geq  \sum_{e \in E} x_e^{(i)} - \frac{3}{\ell}\sum_{e\in E} x_e^{(i)}-2^{-i+2}\sum_{v \in V}c_v^{(i)} \geq \paren{ 1 - \frac{3}{\ell} - 2^{-i+3}} \sum_{e \in E} x_e^{(i)}.\end{aligned}\end{equation*} 
It follows that
\begin{equation*}\begin{aligned}\sum_{e \in E} x_e^{(4)} &\geq \paren{\prod_{i=5}^{\ceil{\log\Delta}} \paren{1-\frac{3}{\ell} - 2^{-i+3}}}\sum_{e \in E} x_e^{(\ceil{\log \Delta})} \geq \paren{\prod_{i=5}^{\ceil{\log \Delta}} e^{-2 \paren{\frac{3}{\ell} + 2^{-i+3}}}}\sum_{e \in E} x_e^{(\ceil{\log \Delta})} 
\\& \geq e^{-\frac{1}{4} -16\sum_{i=5}^{\ceil{\log \Delta}} 2^{-i}}\sum_{e \in E} x_e^{(\ceil{\log \Delta})} \geq \frac{1}{e^{\frac{5}{4}}}\sum_{e \in E} x_e^{(\ceil{\log \Delta})}\geq \frac{1}{4e^{\frac{5}{4}}} |M^*| \geq \frac{1}{14}|M^*|
\end{aligned}\end{equation*}
for a maximum matching $M^*$, recalling that we started with a 4-approximate maximum matching. Here, the second inequality holds because $\frac{3}{\ell} + 2^{-i+3} \leq \frac{1}{2}$, as $i \geq 5$. 
Finally, observe that in all the rounding phases the constraints $c_v=\sum_{e\in E(v)} x_{e} \leq 1$ are preserved, since the value $c_v$ can increase by at most a $2$-factor and only when $v$ is loose.
\end{proof}

\paragraph{Step 3, Final Rounding} So far, we have an almost integral matching. Next, we round all edges to either 0 or 1, by finding a maximal matching in the graph induced by edges with positive value.  

\begin{lemma}\label{rounding-end}
There is an $O(1)$-round deterministic distributed algorithm that, given a $2^{-4}$-fractional $14$-approximate maximum matching in a 2-colored bipartite graph, computes an integral matching that is $434$-approximate.
\end{lemma}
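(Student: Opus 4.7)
The plan is to observe that the $2^{-4}$-fractional matching is supported on a graph of constant maximum degree, so we can afford to invoke a classical low-degree maximal matching routine on that support graph and then bound how much we lose by going from the fractional value to the integral value.

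Concretely, let $H$ be the subgraph of $G$ edge-induced by $E_{>0}:=\{e \in E \colon x_e > 0\}$. Since every nonzero $x_e$ is at least $2^{-4}=1/16$ and every node $v$ satisfies the matching constraint $c_v=\sum_{e \in E(v)} x_e \leq 1$, we get $d_H(v) \leq 16$ for all $v$, i.e.\ $\Delta_H \leq 16$. The given 2-coloring of $G$ restricted to $V(H)$ is still a proper 2-coloring of $H$, so I would simply invoke the Panconesi--Rizzi algorithm (\Cref{MM-bipartite-small-degree}) on $H$ with $q=2$, which produces a maximal matching $M$ of $H$ in $O(\Delta_H + \log^* q) = O(1)$ rounds. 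Since $M \subseteq E(H) \subseteq E$, it is also a matching of $G$, and clearly requires only $O(1)$ rounds of communication in $G$.

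It then remains to argue the $434$-approximation. By \Cref{trivial-size-MM}(i) applied inside $H$, the maximal matching $M$ satisfies
\[
|M| \;\geq\; \frac{|E(H)|}{2\Delta_H - 1} \;\geq\; \frac{|E(H)|}{31}.
\]
On the other hand, because every $x_e \leq 1$, the number of support edges dominates the total fractional weight, $|E(H)| \geq \sum_{e \in E} x_e$, and by hypothesis the fractional matching is $14$-approximate, i.e.\ $\sum_{e \in E} x_e \geq |M^*|/14$ for a maximum matching $M^*$ of $G$. Chaining these three inequalities yields
\[
|M| \;\geq\; \frac{1}{31} \sum_{e \in E} x_e \;\geq\; \frac{|M^*|}{31 \cdot 14} \;=\; \frac{|M^*|}{434},
\]
which is exactly the claimed approximation ratio.

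There is really no substantive obstacle here; the two things to keep in mind are that a $2^{-4}$-fractional matching forces constant degree in the support (so the Panconesi--Rizzi routine is applicable in $O(1)$ rounds), and that the $x_e \leq 1$ bound lets one convert a fractional weight lower bound into an edge-count lower bound without any loss. The constant $434 = 31 \cdot 14$ is loose and can presumably be tightened by a sharper degree bound or by directly comparing $|M|$ to $\sum_e x_e$, but as the paper explicitly notes, optimizing constants is not the goal.
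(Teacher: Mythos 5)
Your proposal is correct and follows essentially the same route as the paper's own proof: bound the degree of the support graph by $16$ using $x_e \geq 2^{-4}$ and $c_v \leq 1$, run the Panconesi--Rizzi maximal matching algorithm of \Cref{MM-bipartite-small-degree} in $O(1)$ rounds using the given $2$-coloring, and chain $|M| \geq |E(H)|/31 \geq \frac{1}{31}\sum_e x_e \geq |M^*|/(31\cdot 14)$ via \Cref{trivial-size-MM}(i). The only difference is that you spell out the intermediate step $|E(H)| \geq \sum_e x_e$ explicitly, which the paper leaves implicit.
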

\begin{proof}
In the given $2^{-4}$-fractional matching, $x_e \neq 0$ means $x_e\geq \frac{1}{16}$. Thus, a node cannot have more than 16 incident edges with non-zero value in this fractional matching. In this constant-degree subgraph, a maximal matching $M$ can be found in $O(1)$ rounds using the algorithm in \Cref{MM-bipartite-small-degree}, recalling that we are given a 2-coloring. We have $|M| \geq \frac{|\{e \in E \colon x_e > 0\}|}{31} \geq\frac{1}{31} \sum_{e \in E} x_e$ by \Cref{trivial-size-MM} (i), and, since we started with a $14$-approximation, $M$ is $434$-approximate.   
\end{proof}

\subsubsection{Constant Approximate Maximum Matching in General Graphs}\label{sec:general}
We explain how the approximation algorithm for maximum matchings in 2-colored bipartite graphs can be employed to find approximate maximum matchings in general graphs. The main idea is to transform the given general graph into a bipartite graph with the same edge set in such a way that a matching in this bipartite graph can be easily turned into a matching in the general graph. 

\begin{proof}[Proof of \Cref{const-approx-MaxM}]
Let $\overrightarrow{E}$ be an arbitrary orientation of the edges $E$. Split every node $v\in V$ into two siblings $v_{\text{in}}$ and $v_{\text{out}}$, and add an edge $\{u_{\text{out}}, v_{\text{in}}\}$ to $E_B$ for every oriented edge $(u,v)\in \overrightarrow{E}$. Let $V_{\text{in}}:=\{ v_{\text{in}} \colon v \in V\}$ and $V_{\text{out}}:=\{ v_{\text{out}} \colon v \in V\}$ be the nodes with color 1 and 2, respectively. By \Cref{const-approx-MaxM-bipartite}, a $c$-approximate maximum matching $M_B$ in the bipartite graph $B=(V_{\text{in}} \bigcup V_{\text{out}},E_B)$ can be computed in $O\paren{\log^2 \Delta}$ rounds. We now go back to $V$, that is, merge $v_{\text{in}}$ and $v_{\text{out}}$ back into $v$. This makes the edges of $M_B$ incident to $v_{\text{in}}$ or $v_{\text{out}}$ now be incident to $v$, leaving us with a graph $G'=(V,M_B)\subseteq G$ with maximum degree $2$. 

We compute a maximal matching $M'$ in $G'$. Using the algorithm of \Cref{MM-bipartite-small-degree}, this can be done in $O(\log ^* n)$ rounds. If an $\poly \Delta$-coloring of $G$ is provided, which implies a coloring of $G'$ with $\poly \Delta$ colors, the round complexity of this step is merely $O(\log^*\Delta)$.

It follows from \Cref{trivial-size-MM} (i) that $|M'|\geq \frac{|M_B|}{3} \geq  \frac{|M_B^*|}{3c}\geq \frac{|M^*|}{3c}$ for maximum matchings $M_B^*$ in $B$ and $M^*$ in $G$, respectively. Thus, $M'$ is a $3c$-approximate maximum matching in $G$. The last inequality is true since by introducing additional nodes but leaving the edge set unchanged (when going from $G$ to $B$), the maximum matching size cannot decrease. 
\end{proof}

\subsection{Wrap-Up: $(2+\eps)$-Approximate Matching and Maximal Matching}\label{wrapup}
In this section, we iteratively invoke the constant approximation algorithm from the \Cref{sec:bipartite} to obtain algorithms for a $(2+\eps)$-approximate maximum matching (\Cref{2+eps-approx-MaxM-bipartite}) and a maximal matching (\Cref{MM}). 

The approximation ratio of a matching algorithm can be improved from $c$ to $2 + \eps$ easily, by $O\paren{\log \frac{1}{\eps}}$ repetitions: each time, we apply the algorithm of \Cref{const-approx-MaxM} to the remaining graph, and remove the found matching together with its neighboring edges from the graph. 

Before explaining the details, we present the following frequently used trick.
\begin{remark}\label{precompute} If a $\poly \Delta$-coloring of a graph is provided, we can go around the $\Omega(\log^* n)$ lower bound of Linial \cite{linial1987LOCAL}, omitting the additive $O(\log^* n)$ term from the round complexity of the algorithms presented in this paper. More generally, if such an algorithm is invoked iteratively, one can first precompute an $O(\Delta^2)$-coloring in $O(\log^*n )$ rounds using Linial's algorithm \cite{linial1992locality}, which allows us to replace the $O(\log^*n)$ term by $O(\log^* \Delta)$ by \Cref{MM-bipartite-small-degree} in each iteration. 
\end{remark}

\begin{proof}[Proof of \Cref{2+eps-approx-MaxM-bipartite}]
Starting with $G_0=G$, for $i=0, \dotsc, k-1$, where $k=O\paren{\log \frac{1}{\eps}}$, iteratively compute a $c$-approximate maximum matching $M_i$ in $G_i$, using the algorithm of \Cref{const-approx-MaxM}.
We delete $M_i$ together with its incident edges from the graph, that is, set $G_{i+1}=\paren{V, E(G_i) \setminus \Gamma^+(M_i)}$. 

Now, we argue that the obtained matching $\bigcup_{i=0}^{k-1}M_i$ is $(2+\eps)$-approximate. To this end, we bound the size of a maximum matching in the remainder graph $G_k$.

Let $M_i^*$ be a maximum matching in $G_i$. An inductive argument shows that $|M_i^*| \leq \paren{1 - \frac{1}{c}}^i |M^*|$. Indeed, observe 
$|M_{i+1}^*|\leq |M_i^*|-|M_i|\leq\paren{1- \frac{1}{c}}|M_i^*|\leq\paren{1-\frac{1}{c}}^{i+1}|M^*|$, where the first inequality holds since otherwise $M_{i+1}^* \cup M_i$ would be a better matching than $M_i^*$ in $G_i$, contradicting the latter's optimality. For $k=\log_{1-\frac{1}{c}} \frac{\eps}{2(2+\eps)}$, we thus have $|M_k^*|\leq \frac{\eps}{2(2+\eps)} |M^*|$. As $\bigcup_{i=0}^{k-1} M_i$ is a maximal matching in $G\setminus G_k$ by construction, $\paren{\bigcup_{i=0}^{k-1} M_i} \cup M_k^*$ is a maximal matching in $G$. By \Cref{trivial-size-MM} (ii), this means that $\card{\bigcup_{i=0}^{k-1} M_i} + |M_k^*| \geq \frac{|M^*|}{2}$, hence $\card{\bigcup_{i=0}^{k-1} M_i}\geq \left(\frac{1}{2}- \frac{\eps}{2(2+\eps)}\right)|M^*|\geq \frac{|M^*|}{2 + \eps}$. 

We have $O\paren{\log \frac{1}{\eps}}$ iterations, each taking $O(\log^2\Delta + \log^*n)$ rounds. As mentioned in \Cref{precompute}, by precomputing an $O(\Delta^2)$-coloring in $O(\log^*n)$ rounds, the round complexity of each iteration can be decreased to $O(\log^2 \Delta + \log^*\Delta)=O(\log^2 \Delta)$, leading to an overall running time of $O(\log^2\Delta \cdot \log \frac{1}{\eps} + \log^*n)$ rounds. 
\end{proof}

 \begin{remark}\label{remark-remaining-matching-size}
The analysis above shows that the matching $M$ computed by the algorithm of \Cref{2+eps-approx-MaxM-bipartite} is not only $(2+\eps)$-approximate, but also has the property that any matching in the remainder graph (induced by $E \setminus \Gamma^+(M)$) can have size at most $\eps|M^*|$ for a maximum matching $M^*$ in $G$.    
\end{remark}

If one increases the number of repetitions to $O(\log n)$, the found matching is maximal.   
\begin{proof}[Proof of \Cref{MM}]
Apply the $c$-approximation algorithm of \Cref{const-approx-MaxM} for $k=\log_{1- \frac{1}{c}} \frac{1}{n}$ iterations on the respective remainder graph, as described in the proof of \Cref{2+eps-approx-MaxM-bipartite}. The same analysis (also adopting the notation from there) shows that a maximum matching $M_k^*$ in the remainder graph $G_k$ must have size $|M_k^*| \leq \frac{|M^*|}{n}<1$, which means that $G_k$ is an empty graph. But then $\bigcup_{i=1}^{k-1} M_i$ must be maximal.  
\end{proof}

 \newpage
\section{Almost Maximal Matching}\label{sec:almostMM}
In the previous section, we have seen how one can obtain a matching that reduces the size of the matching in the remainder graph, that is, the graph after removing the matching and all incident edges, by a constant factor. Intuitively, one would expect that this also reduces the number of remaining edges by a constant factor, which would directly lead to an (almost) maximal matching just by repetitions. However, this is not the case, since not every matched edge removes the same number of edges from the graph, particularly in non-regular graphs. This calls for an approach that weights edges incident to nodes of different degrees differently, which naturally brings into play weighted matchings.

In \Cref{const-approximate-MaxWM}, we present a fast algorithm that finds a constant approximation of maximum weighted matching based on the algorithm of \Cref{2+eps-approx-MaxM-bipartite}. Then, we use this algorithm, by assigning certain weights to the edges, to find a matching that removes a constant fraction of the edges in \Cref{const-almost-MM}. Via  $O\paren{\log \frac{1}{\eps}}$ repetitions of this, each time removing the found matching and its incident edges, we get an $\eps$-maximal matching. More details are provided in the proof of \Cref{eps-almost-MM-bipartite} in the end of this section. Observe that when setting $\eps=\frac{1}{n^2}$, thus increasing the number of repetitions to $O(\log n)$, we obtain a maximal matching.
\begin{lemma}\label{const-approximate-MaxWM}
There is an $O\paren{\log^2 \Delta+\log^*n}$-round deterministic distributed algorithm for a $256$-approximate maximum weighted matching.  
\end{lemma}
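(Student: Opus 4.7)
The plan is to reduce constant-approximate maximum weighted matching to parallel instances of unweighted matching via weight bucketing. First I would round each edge's weight $w_e$ down to the nearest power of $2$, yielding buckets $C_j := \{e \in E : w_e \in [2^j, 2^{j+1})\}$; this rounding loses at most a factor of $2$. Within any single bucket all edge weights agree up to a factor of $2$, so a constant-approximate \emph{unweighted} matching in the subgraph $(V, C_j)$ doubles as a constant-approximate \emph{weighted} matching for that subgraph.

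Next I would invoke the constant-approximate unweighted matching algorithm of \Cref{const-approx-MaxM} on $(V, C_j)$ \emph{in parallel for every} $j$. Because each node can run all of its incident buckets simultaneously, sharing the same rounds of communication, this step costs only $O(\log^2\Delta + \log^* n)$ rounds. Writing $M_j$ for the resulting bucket-level matchings and $c$ for the approximation constant of \Cref{const-approx-MaxM}, we have $|M_j|\geq |M^*\cap C_j|/c$ for a maximum weighted matching $M^*$, and hence
\[ \sum_j w(M_j) \;\geq\; \sum_j 2^j |M_j| \;\geq\; \frac{1}{c}\sum_j 2^j \,|M^* \cap C_j| \;\geq\; \frac{w(M^*)}{2c}. \]

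Finally I would combine the per-bucket matchings $M_j$ into a single matching $M$ with $w(M) = \Omega\bigl(\sum_j w(M_j)\bigr)$. To stay within the round budget, I would first prune to ``locally heavy'' edges---those $e = \{u,v\}$ satisfying $w_e \geq W_u/\Delta^{O(1)}$ and $w_e \geq W_v/\Delta^{O(1)}$, where $W_v := \max_{e' \in E(v)} w_{e'}$---at a cost of at most an $O(\Delta^{-O(1)})$ fraction of $w(M^*)$, using the LP-duality bound $\sum_v W_v \leq 4\,w(M^*)$ (which follows by noting that $y_v := W_v/2$ is dual-feasible for the matching LP). After pruning, each node sees edges in only $O(\log \Delta)$ distinct buckets, so $H := \bigcup_j M_j$ has maximum degree $O(\log \Delta)$. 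I would then extract a matching from $H$ by having each node greedily propose its incident $H$-edges in decreasing order of bucket (processing its own $O(\log \Delta)$ locally relevant buckets in rank order), accepting an edge whenever both endpoints are still unmatched; this combining phase takes $O(\log\Delta)$ additional rounds.

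The main obstacle is the analysis of this combining step: showing that the heaviest-bucket-first greedy retains a constant fraction of $\sum_j w(M_j)$. The intuition is that at each node $v$ the weight contribution from the heaviest bucket incident to $v$ dominates the geometric sum over lighter buckets up to a constant factor, and the greedy rule ensures that most nodes get matched through a bucket close to their heaviest. A careful charging argument---bounding the ``wasted'' weight of nodes whose heaviest-bucket edges conflict with even-heavier-bucket matches of their neighbors, and re-charging geometrically along these preference chains---should complete the proof. The specific ratio $256$ would then arise from composing the factor-$2$ weight-rounding loss, the constant-factor per-bucket approximation loss of \Cref{const-approx-MaxM}, the $O(1)$ pruning loss, and the $O(1)$ combining loss.
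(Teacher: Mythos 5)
Your high-level plan---bucket the weights geometrically, run the unweighted constant-approximation of \Cref{const-approx-MaxM} in parallel on each bucket, and then combine the per-bucket matchings---is the same as the paper's, but the combining step is where the actual content of the lemma lies, and that is exactly where your argument has a gap. The paper sidesteps your entire third step by bucketing in powers of $8$ rather than $2$ and then using a one-round rule: an edge $e\in M_i$ survives unless it touches an edge of $M_j$ with $j>i$. The loss analysis is a chained blame argument in which the number of edges of $M_{i-j}$ charging a surviving edge of $M_i$ grows like $2^j$ while their weights shrink like $8^{-j}$, so the charged weight $\sum_{j} 2^j 8^{i-j}\leq \frac{1}{3}8^i$ converges. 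With base-$2$ buckets this argument fails (the analogous sum $\sum_j 2^j 2^{-j}$ diverges), which is presumably why you were forced into a more elaborate greedy combining phase---and you explicitly leave both its analysis (``a careful charging argument \dots should complete the proof'') and its $O(\log\Delta)$-round implementation (nodes whose incident buckets have different local ranks do not naturally synchronize when each processes ``its own buckets in rank order'') unproven. That is the heart of the lemma, not a detail to be deferred.

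Moreover, one concrete claim you do make is false: $\sum_v W_v\leq 4\,w(M^*)$ does not follow from LP duality and is not true. Duality with the feasible dual $y_v=W_v/2$ gives $w(M^*)\leq \frac{1}{2}\sum_v W_v$, i.e., a \emph{lower} bound on $\sum_v W_v$, not an upper bound; and for a star with $\Delta$ unit-weight edges one has $\sum_v W_v=\Delta+1$ while $w(M^*)=1$. One can show $\sum_v W_v=O(\Delta)\cdot w(M^*)$, which would still suffice for a pruning threshold of $W_v/\Delta^{O(1)}$, but as written your pruning-loss bound is wrong. In short: the bucketing and the per-bucket reduction to \Cref{const-approx-MaxM} (the paper uses \Cref{2+eps-approx-MaxM-bipartite} with $\eps=1$) are fine, but you are missing the paper's key idea---choosing the bucket base large enough that the trivial highest-bucket-wins conflict resolution loses only a convergent geometric fraction---and the substitute you propose for it is neither correctly implemented nor analyzed.
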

\begin{proof}
We assume without loss of generality that the edge weights are normalized, that is, from a set $\{1, \dotsc, W\}$ for some maximum weight $W$. 
Round the weights $w_e$ for $e\in E$ down to the next power of $8$, resulting in weights $w_e'$. This rounding procedure lets us lose at most a $8$-factor in the total weight and provides us with a decomposition of $G$ into graphs $C_i=(V,E_i)$ with $E_i:=\{e \in E \colon w_e'=8^i\}$ for $i\in\{0, \dotsc, \floor{\log_8W}\}$. 

In parallel, run the algorithm of \Cref{2+eps-approx-MaxM-bipartite} with $\eps=1$ on every $C_i$ to find a $3$-approximate maximum matching $M_i$ in $C_i$ in $O(\log^2 \Delta+\log^*n)$ rounds. Observe that while the edges in $\bigcup_{i} M_i$ do not form a matching, since edges from $M_i$ and $M_j$ for $i\neq j$ can be neighboring, a matching $M \subseteq \bigcup_i M_i$ can be obtained by deleting all but the highest-index edge in every such conflict, that is, by removing all edges $e \in M_i$ with an incident edge $e' \in M_j$ for a $j > i$.

In the following, we argue that the weight of $M$ cannot be too small compared to the weight of $\bigcup_i M_i$ by an argument based on counting in two ways. 

Every edge $e \in \left(\bigcup_i M_i\right)\setminus M$ puts blame $w_e'$ on an edge in $M$ as follows. Since $e \notin M$, there is an edge $e'$ incident to $e$ such that $e \in M_i$ and $e' \in M_j$ for some $j>i$. If $e' \in M$, then $e$ blames weight $w_e$ on $e'$. If $e' \notin M$, then $e$ puts blame $w_e$ on the same edge as $e'$ does. 

For an edge $e \in M \cap E_i$ and $j \in [i]$, let $n_{j}$ be the maximum number of edges from $M_{i-j}$ that blame $e$. An inductive argument shows that $n_j\leq 2^j$. Indeed, there can be at most two edges from $M_{i-1}$ blaming $e$, at most one per endpoint of $e$, and, for $j > 1$, we have $n_j\leq 2 +  \sum_{j'=1}^{j-1} n_{j'}\leq 2 + \sum_{j'=1}^{j-1}2^{j'}= 2^j$, since at most two edges in $M_{i-j}$ can be incident to $e$ and at most one further edge can be incident to each edge in $M_{i-j'}$ for $j'<j$.

Therefore, overall, at most $\sum_{j=1}^i 2^j8^{i-j}\leq \frac{1}{3}8^{i} \leq \frac{1}{3}w_e'$ weight is blamed on $e$. This means that $\sum_{e\in \left(\cup_i M_i\right) \setminus M} w_e' \leq \frac{1}{3} \sum_{e \in M} w_e'$, hence  
$\sum_{e \in \cup_i M_i} w_e' \leq \frac{4}{3} \sum_{e \in M} w_e'$, and lets us conclude that $\sum_{e \in M^*} w_e\leq 8 \sum_{e \in M^*} w_e' \leq 24 \sum_{e \in \cup_i M_i} w_e' \leq 32 \sum_{e \in M} w_e' \leq 256 \sum_{ e \in M} w_e$ for a maximum weighted matching $M^*$.
\end{proof}

\newpage
Next, we explain how to use this algorithm to remove a constant fraction of edges, by introducing appropriately chosen weights. We define the weight of each edge to be the number of its incident edges. This way, an (approximate) maximum weighted matching corresponds to a matching that removes a large number of edges. 
\begin{lemma}\label{const-almost-MM}
There is an $O\paren{\log^2 \Delta+\log^*n}$-round deterministic distributed algorithm for a $\frac{511}{512}$-maximal matching.
\end{lemma}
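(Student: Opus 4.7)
The plan is to reduce \Cref{const-almost-MM} to \Cref{const-approximate-MaxWM} via an edge weighting that rewards matching edges whose endpoints have high degree, so that a heavy matching automatically covers many edges of $G$.

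\textbf{Reduction.} In a single round of communication, each node announces its degree to its neighbors and every edge $e = \{u,v\}$ sets $w_e := d_G(u) + d_G(v)$. These weights are positive integers in $\{2,\dots,2\Delta\}$, so feeding them to \Cref{const-approximate-MaxWM} yields, in $O(\log^2\Delta + \log^* n)$ rounds, a matching $M$ with $w(M) \ge w(M^*)/256$, where $M^*$ is a maximum weighted matching under $w$.

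\textbf{Two double-counting identities.} Let $V(M')$ denote the set of nodes matched by a matching $M'$. For any matching $M'$,
\[
w(M') \;=\; \sum_{e=\{u,v\}\in M'}\bigl(d_G(u)+d_G(v)\bigr) \;=\; \sum_{v\in V(M')} d_G(v) \;=\; \sum_{f\in E} \card{f \cap V(M')},
\]
and since $\card{f \cap V(M')} \in \{0,1,2\}$ is positive iff $f \in \Gamma^+(M')$, we obtain $w(M') \le 2\card{\Gamma^+(M')}$. Applied to our returned $M$, this gives $\card{\Gamma^+(M)} \ge w(M)/2$. In the reverse direction, pick any maximal matching $M^{\mathrm{mm}}$ of $G$; by maximality every edge of $G$ has at least one endpoint in $V(M^{\mathrm{mm}})$, so the same identity yields $w(M^{\mathrm{mm}}) \ge |E|$, and hence $w(M^*) \ge w(M^{\mathrm{mm}}) \ge |E|$.

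\textbf{Chaining.} Combining the three inequalities gives
\[
\card{\Gamma^+(M)} \;\ge\; \frac{w(M)}{2} \;\ge\; \frac{w(M^*)}{2\cdot 256} \;\ge\; \frac{|E|}{512},
\]
so at most $\paren{1-\tfrac{1}{512}}|E| = \tfrac{511}{512}|E|$ edges of $G$ remain uncovered by $M$, which is exactly what it means for $M$ to be $\tfrac{511}{512}$-maximal. The overall round complexity is dominated by the single call to \Cref{const-approximate-MaxWM} and is $O(\log^2\Delta + \log^* n)$. There is no substantive obstacle here — the lemma is essentially a one-line reduction. The only design choice worth commenting on is the weighting: using the sum of endpoint degrees (rather than the number of edges sharing an endpoint with $e$, which can vanish on isolated edges) keeps both double-counting identities simultaneously tight up to a factor of $2$ and guarantees strictly positive integer weights bounded by $2\Delta$, which is precisely the regime in which \Cref{const-approximate-MaxWM} runs in $O(\log^2\Delta + \log^* n)$ rounds.
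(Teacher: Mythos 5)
Your proof is correct and follows essentially the same route as the paper: reduce to the $256$-approximate maximum weighted matching algorithm with degree-based edge weights, lower-bound $w(M^*)$ by $|E|$ via a maximal matching, and upper-bound $w(M)$ by $2\card{\Gamma^+(M)}$. The only (immaterial) differences are that the paper uses $w_e = d_G(u)+d_G(v)-1$ and phrases both counting steps as blame arguments, whereas you use $w_e = d_G(u)+d_G(v)$ and the single identity $w(M') = \sum_{f\in E}\card{f\cap V(M')}$.
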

\begin{proof}
For each edge $e=\{u,v\} \in E$, introduce a weight $w_e=d_G(u)+d_G(v)-1$, and apply the algorithm of \Cref{const-approximate-MaxWM} to find a $256$-approximate maximum weighted matching $M$ in $G$. 

For the weight $w(M^*)$ of a maximum weighted matching $M^*$, it holds that $w(M^*) \geq |E|$, as the following simple argument based on counting in two ways shows. Let every edge in $E$ put a blame on an edge in $M^*$ that is responsible for its removal from the graph as follows. An edge $e\in M^*$ blames itself. An edge $e \notin M^*$ blames an arbitrary incident edge $e' \in M^*$. Notice that at least one such edge must exist, as otherwise $M^*$ would not even be maximal. In this way, $|E|$ many blames have been put onto edges in $M^*$ such that no edge $e=\{u,v\}\in M^*$ is blamed more than $w_e$ times, as $e$ can be blamed by itself and any incident edge. Therefore, indeed $w(M^*)=\sum_{e\in M^*} w_e \geq |E|$, and, as $M$ is a $256$-approximate, it follows that $\sum_{e\in M} w_e \geq \frac{|E|}{256}$.  

Now, observe that $w_e$ is the number of edges that are deleted when removing $e$ together with its incident edges from $G$. Since every edge can be incident to at most two matched edges (and thus can be deleted by at most two edges in the matching), in total 
$|\Gamma^+(M)|\geq \frac{1}{2} \sum_{e\in M} w_e \geq \frac{|E|}{512}$ many edges are removed from $G$ when deleting the edges in and incident to $M$, which proves that $M$ is a 
$\frac{511}{512}$-maximal matching. 
\end{proof}

We iteratively invoke this algorithm to successively reduce the number of remaining edges.  

\begin{proof}[Proof of \Cref{eps-almost-MM-bipartite}]
For $i=0, \dotsc, k=O\paren{\log \frac{1}{\eps}}$ and $G_0=G$, iteratively apply the algorithm of \Cref{const-almost-MM} to $G_i$ to get a $c$-maximal matching $M_i$ in $G_i$. Set $G_{i+1}=\paren{V,E(G_i)\setminus \Gamma^+\paren{M_i}}$, that is, remove the matching and its neighboring edges from the graph. Then $M:= \bigcup_{i=0}^{k-1} M_i$ for $k=\log_{c} \eps$ is $\eps$-approximate, since $\card{E \setminus \Gamma^+(M)}=|E(G_k)|\leq c^k |E| \leq \eps |E|$, using $\card{E\paren{G_{i+1}}} \leq c \card{E(G_i)}$. 

Overall, recalling \Cref{precompute}, this takes $O( \log^2\Delta \cdot \log \frac{1}{\eps} + \log^*n)$.
\end{proof}

\section{Extensions and Corollaries}
\subsection{$B$-Matching}\label{sec:b-matching}
In this subsection, we explain that only slight changes to the algorithm of \Cref{sec:approx-MaxM} are sufficient to make it suitable also for computing approximations of maximum $b$-matching. To this end, we first introduce an approximation algorithm for maximum $b$-matching in 2-colored bipartite graphs in \Cref{const-approx-MaxBM-bipartite}. Then, we extend this algorithm to work for general graphs, in \Cref{const-approx-MaxBM}. Finally, in the second part of the proof of \Cref{2+eps-approx-MaxM-bipartite} presented at the end of this subsection, we show that the approximation ratio can be improved to a value arbitrarily close to 2, simply by repetitions of this constant approximation algorithm.  

\begin{lemma}\label{const-approx-MaxBM-bipartite}
There is an $O\paren{\log^2\Delta}$-round deterministic distributed algorithm for a $c$-approximate maximum $b$-matching in a 2-colored bipartite graph, for some constant $c$.
\end{lemma}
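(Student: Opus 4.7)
The plan is to mirror the three-step structure of the proof of \Cref{const-approx-MaxM-bipartite}: first compute a $2^{-\lceil\log\Delta\rceil}$-fractional constant-approximate $b$-matching, then gradually round it down to a $2^{-4}$-fractional one via (an adaptation of) the Rounding Lemma, and finally round this almost-integral $b$-matching to an integral one. Throughout I will call a node $v$ \emph{loose} if $c_v = \sum_{e \in E(v)} x_e \leq b_v/2$ and \emph{tight} otherwise, and an edge loose iff both of its endpoints are.

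Steps~1 and~2 should need only cosmetic changes. For Step~1, the greedy doubling of \Cref{fractional} carries over: one starts from $x_e = 2^{-\lceil\log\Delta\rceil}$ and doubles loose edges for $O(\log\Delta)$ rounds, which preserves $c_v \leq b_v$ since a loose $v$ has $c_v \leq b_v/2$ and its total can at most double per round. Each edge $e=\{u,v\} \in M^*$ is charged to a tight endpoint (one exists, because $e$ is tight); since each $v$ is charged at most $b_v$ times and tight endpoints satisfy $c_v > b_v/2$, the same double-counting yields $\sum_e x_e \geq \tfrac{1}{4}|M^*|$. For Step~2, the algorithm of \Cref{rounding-phase} is reused verbatim: the 2-decomposition and raise/drop operations only look at edge values, not on the $b_v$. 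Two things have to be checked: (a) in Case~C, raising a boundary edge at a loose endpoint still respects $c_v \leq b_v$, because the increment $2^{-i} \leq 2^{-5}$ is bounded by $b_v/2$ (using $b_v \geq 1$); and (b) the short-path loss is still a constant fraction of $\sum_e x_e$, which follows from $\sum_e x_e = \tfrac{1}{2}\sum_v c_v \geq \tfrac{1}{4}|V_{\mathrm{tight}}|$ under the new looseness threshold, so each tight endpoint can be charged $O(2^{-i})$ of $\sum_e x_e$ exactly as in the matching proof. The end product is a $2^{-4}$-fractional $O(1)$-approximate $b$-matching in $O(\log^2\Delta)$ rounds.

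The genuinely new step, and the main obstacle, is Step~3. The positive-edge subgraph $H$ now has $d_H(v) \leq 16 b_v$, which is not constant, so one cannot directly invoke a constant-round maximal matching routine as in \Cref{rounding-end}. To recover a bounded-degree final rounding, I plan to introduce a ``$b$-decomposition'' in the spirit of the 2-decomposition: each node $v$ locally partitions its positive incident edges among $b_v$ slot copies $v_1,\dotsc,v_{b_v}$ so that each copy receives at most $16$ of them. The resulting graph $H'$ on slot copies is still 2-colored bipartite (each copy inherits the color of its original node), has the same edge set as $H$, and has maximum degree at most $16$. Then by \Cref{MM-bipartite-small-degree} a maximal matching $M'$ in $H'$ can be computed in $O(1)$ rounds, and projecting $M'$ back to $H$ gives a valid $b$-matching $M$, since each slot of $v$ contributes at most one matched edge, so $v$ is covered at most $b_v$ times. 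By \Cref{trivial-size-MM}, $|M| = |M'| \geq |E(H')|/31 = |E(H)|/31 \geq \left(\sum_e x_e\right)/31$, so combining with Step~2 yields a constant approximation to the maximum $b$-matching in $O(\log^2\Delta)$ rounds overall.
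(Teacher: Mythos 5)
Your proposal follows the paper's proof essentially verbatim: the same greedy fractional $b$-matching, the same reuse of the Rounding Lemma with the observation that the stricter tightness threshold $c_v \geq b_v/2 \geq \frac{1}{2}$ only helps the loss analysis, and the same final rounding via splitting each node $v$ into at most $b_v$ copies of degree at most $16$ and running \Cref{MM-bipartite-small-degree} there (these are the paper's \Cref{fractionalB}, \Cref{rounding-phaseB}, and \Cref{rounding-end-B}). The one small inaccuracy is in your Step~1: with looseness redefined relative to $b_v/2$, the doubling must additionally be capped once an edge reaches value $1$ (otherwise $x_e$ could exceed $1$), and consequently an edge of $M^*$ need \emph{not} have a tight endpoint --- it may have been stopped at value $1$ with both endpoints still loose --- so the charging argument needs the extra case where such an edge blames itself, which still gives $1 \leq 4x_e$; the paper handles exactly this case in \Cref{fractionalB}.
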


This result is a direct consequence of \Cref{fractionalB}, \Cref{rounding-phaseB}, and \Cref{rounding-end-B}, which we present next. These lemmas respectively show how a fractional constant approximate $b$-matching can be found, how this fractional matching can be round to almost integrality, and how these almost integral values can be turned into an integral matching, while only losing a constant fraction of the total value. The proofs are very similar to the ones in \Cref{sec:bipartite}, except for the very last step of rounding (\Cref{rounding-end-B}), which requires one extra step, as we shall discuss. 

In the following, we call a node $v\in V$ loose if $c_v=\sum_{e \in E(v)} x_e<  \frac{b_v}{2}$, and tight otherwise. As before, an edge $e$ is called \emph{tight} if either of its endpoints are tight, otherwise edge $e$ is called \emph{loose}. 

The next lemma shows how to obtain a $4$-approximate maximum $b$-matching in $O(\log \Delta)$ rounds. Alternatively, \cite{DBLP:conf/soda/KuhnMW06} find such a $b$-matching in $O(\log^2 \Delta)$ rounds. 
\begin{lemma}\label{fractionalB}
There is an $O(\log \Delta)$-round deterministic distributed algorithm for a $2^{-\ceil{\log \Delta}}$-fractional 4-approximate maximum $b$-matching. 
\end{lemma}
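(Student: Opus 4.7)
The plan is to mirror the proof of \Cref{fractional} (the corresponding step for ordinary matching) while adapting it to the $b$-matching constraint $c_v \leq b_v$, where $v$ is called \emph{loose} if $c_v < b_v/2$ and \emph{tight} otherwise. The only real divergence from the matching case is that now $b_v$ may exceed $1$, so a node can still be loose even after some of its incident edges reach the maximum value $1$.

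Concretely, I would initialize $x_e = 2^{-\ceil{\log \Delta}}$ for every $e \in E$. This is feasible because $c_v \leq d_G(v) \cdot 2^{-\ceil{\log \Delta}} \leq 1 \leq b_v$ and $x_e \leq 1$ trivially. Then, for $\ceil{\log \Delta}$ parallel iterations, I double the value of every loose edge, capping at $1$. Doubling only loose edges preserves the constraints: at a loose vertex $v$, even if all its incident edges were simultaneously doubled, the new $c_v$ would be at most $2 c_v < b_v$; while at a tight vertex $v$ no incident edge is loose, so $c_v$ does not change. Any edge that is loose in some round has its value doubled, so after $\ceil{\log \Delta}$ rounds every edge is in a terminal state -- either tight or already saturated at $x_e = 1$. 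This gives an $O(\log \Delta)$-round algorithm whose output is $2^{-\ceil{\log \Delta}}$-fractional.

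The main obstacle, and the only genuine departure from the analysis in \Cref{fractional}, is the approximation bound, because a terminal edge need not have a tight endpoint if its value has been capped at $1$. To handle this, I would split a maximum integral $b$-matching $M^*$ into $M^*_1 := \{e \in M^* : x_e = 1\}$ and $M^*_0 := M^* \setminus M^*_1$. The first set contributes at most $|M^*_1| \leq \sum_e x_e$. For the second set, every $e \in M^*_0$ has at least one tight endpoint (otherwise the algorithm would have doubled it further); letting $T$ denote the tight vertices and using that each $v$ lies in at most $b_v$ edges of $M^*$, I obtain $|M^*_0| \leq \sum_{v \in T} b_v < 2 \sum_{v \in T} c_v \leq 2 \sum_v c_v = 4 \sum_e x_e$, since $b_v < 2 c_v$ at every tight $v$. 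Adding the two estimates yields a constant-factor approximation; a slightly tighter charging, summing $c_u + c_v$ over the endpoints of $M^*$ in the spirit of \Cref{fractional} rather than bounding the two parts in isolation, should recover the claimed factor $4$.
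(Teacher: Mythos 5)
Your algorithm is exactly the paper's (initialize all values at $2^{-\ceil{\log\Delta}}$, repeatedly double the loose edges, stop when an edge becomes tight or reaches value $1$), and your feasibility and termination arguments are correct. The issue is the approximation constant. Your split of $M^*$ into saturated edges $M^*_1$ and edges with a tight endpoint is sound, but it adds the two worst cases globally and only yields $|M^*| \le |M^*_1| + |M^*_0| \le \sum_e x_e + 4\sum_e x_e = 5\sum_e x_e$, a $5$-approximation rather than the claimed $4$. More importantly, the repair you sketch --- summing $c_u + c_v$ over the endpoints of $M^*$ ``in the spirit of \Cref{fractional}'' --- does not port to $b$-matchings: a vertex $v$ can lie on up to $b_v$ edges of $M^*$, so $\sum_{e=\{u,v\}\in M^*}(c_u+c_v)$ may count $c_v$ up to $b_v$ times and is no longer bounded by $\sum_v c_v = 2\sum_e x_e$. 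This overcounting is precisely the new difficulty that the ordinary-matching argument does not face, so that step would fail as written.

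The missing idea (and the paper's route to the factor $4$) is a per-edge charging with \emph{proportional redistribution}. Each $e\in M^*$ blames a tight endpoint if it has one, and otherwise blames itself (in which case $x_e=1$). A tight vertex $v$ receives at most $b_v \le 2c_v = 2\sum_{e'\in E(v)}x_{e'}$ units of blame and redistributes them among its incident edges proportionally to their values, so each edge receives at most $2x_{e'}$ from each tight endpoint. An edge therefore either collects at most $2x_{e'}+2x_{e'}=4x_{e'}$ from its (at most two) tight endpoints, or blames itself one unit with $x_{e'}=1\le 4x_{e'}$; since these cases are mutually exclusive per edge, summing gives $|M^*|\le 4\sum_e x_e$. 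If you only need some constant, your argument suffices (the paper does not optimize constants, and the downstream lemmas go through with $5$ in place of $4$ at the cost of slightly larger constants); but as a proof of the lemma as stated, the factor $4$ is not established, and the proposed tightening is the one step that genuinely breaks.
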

 \begin{proof}[Proof of \Cref{fractionalB}]
As in \Cref{fractional}, starting with $x_e =2^{-\ceil{\log \Delta}}$ (and thus $c_v \leq 1 \leq b_v$), in parallel, the edge values of non-tight edges with value $\leq \frac{1}{2}$ are gradually increased by a $2$-factor. This takes no more than $O(\log \Delta)$ rounds. We employ a simple argument based on counting in two ways to show that this yields a 4-approximation of a maximum $b$-matching $M^*$. Let each edge $e\in M^*$ blame one of its tight endpoints, if existent. If there is no tight endpoint, the value of the edge is $\leq 1$, and is blamed to $e$. In this way, each tight node $v$ --- which by definition has value $c_v = \sum_{e \in E(v)} x_e \geq \frac{1}{2}$ --- is blamed at most $b_v$ times. Let $v$ split this blame uniformly among its incident edges in $M^*$ such that each edge $e'$ is blamed at most twice its value $x_{e'}$. In this way, every edge $e'$ is blamed at most $4x_{e'}$, as it can be blamed by both of its tight endpoints, or by the edge itself if it has no tight endpoint. It follows that $|M^*| \leq 4 \sum_{e\in E} x_e$.
\end{proof}

Next, we transform this fractional solution into an almost integral solution, which is still a constant approximation. 

\begin{lemma}\label{rounding-phaseB}
There is an $O(\log^2 \Delta)$-round deterministic distributed algorithm that transforms a $2^{-\ceil{\log \Delta}}$-fractional $4$-approximate maximum $b$-matching in a $2$-colored bipartite graph into a $2^{-4}$-fractional $14$-approximate maximum $b$-matching.
\end{lemma}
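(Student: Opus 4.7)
The plan is to mirror, almost step for step, the proof of the Rounding Lemma for ordinary matching (\Cref{rounding-phase}), adjusting only what needs to be adjusted to accommodate the $b$-matching capacities. We proceed in $\ceil{\log \Delta} - 4$ phases, where the $k^{\text{th}}$ phase eliminates every edge of value $x_e = 2^{-i}$ with $i = \ceil{\log \Delta} - k + 1$ by either doubling it to $2^{-i+1}$ or zeroing it out. As before, let $H$ be the subgraph induced by $E_i = \{e : x_e = 2^{-i}\}$ and let $H'$ be its $2$-decomposition (which remains a node-disjoint union of paths and even-length cycles, since bipartiteness is preserved by $2$-decomposition). Set $\ell = 12 \log \Delta$ and split into the three cases (short cycles, long cycles/long paths, short paths) exactly as in the matching proof. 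The structural operations--the alternating assignment on short cycles, the length-$\geq \ell$ orientation plus color-based assignment on long cycles and long paths, and the endpoint-sensitive assignment on short paths--are applied verbatim.

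The only adaptation is in the feasibility check. Here we redefine loose and tight with respect to the new thresholds: $v$ is loose iff $c_v = \sum_{e \in E(v)} x_e \leq b_v/2$ and tight otherwise. Two points must be verified. First, feasibility $c_v \leq b_v$ is maintained: for an interior node on a cycle or path (in the $2$-decomposition), exactly one incident edge is raised while the other is dropped, so the copy's contribution to $c_v$ is unchanged; for an endpoint of a short path, an incident edge is raised only if $v$ is loose, in which case $c_v$ grows by at most the edge's value, which doubles, so $c_v$ stays $\leq b_v$. Summing across the $\lceil d_G(v)/2 \rceil$ copies of $v$ preserves $c_v \leq b_v$ because at the original node $v$ the update is exactly the sum of the per-copy updates, and each per-copy update keeps the local sum below the (appropriately scaled) cap. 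Second, every assignment step still produces a valid fractional $b$-matching because each $x_e$ lives in $[0, 1]$ (values stay bounded by $1/2$ before a doubling step, since $k \leq \ceil{\log \Delta} - 4$ implies $i \geq 5$).

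For the loss bookkeeping, the same per-phase bound goes through. In long cycles/paths each maximal directed path loses at most $3 \cdot 2^{-i}$, so the phase loses at most a $3/\ell$ fraction of $\sum_e x_e^{(i)}$. In short paths, only tight endpoints contribute a loss of at most $2^{-i+1}$ per endpoint; because tight now means $c_v \geq b_v/2 \geq 1/2$ (using $b_v \geq 1$), each such loss is bounded by $2^{-i+2} c_v$, so the total loss across short paths is at most a $2^{-i+2}$ fraction of $\sum_v c_v = 2 \sum_e x_e^{(i)}$. Plugging these two bounds into the same recurrence
\[
\sum_{e} x_e^{(i-1)} \geq \left(1 - \tfrac{3}{\ell} - 2^{-i+3}\right) \sum_e x_e^{(i)}
\]
and telescoping from $i = \ceil{\log \Delta}$ down to $i = 4$ yields the same constant factor $\tfrac{1}{e^{5/4}}$, turning a $4$-approximate fractional $b$-matching into a $14$-approximate $2^{-4}$-fractional one. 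The round complexity is unchanged at $O(\log^2 \Delta)$ since each phase still runs in $O(\log \Delta)$ rounds.

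The main obstacle I anticipate is verifying that the $2$-decomposition argument for feasibility propagates correctly to the original graph: per-copy we argued at most one raised edge, but $v$ itself may have $\lceil d_G(v)/2 \rceil$ copies, each potentially raising an edge, so naively the constraint at $v$ could deteriorate even when every copy is fine. The reason this does not happen is that every copy that raises also drops, so the net change at $v$ is non-positive on cycles and controlled on paths exactly as above; one has to be careful, however, at the (at most one) degree-$1$ copy per odd-degree $v$, which is precisely what the "tight endpoint" analysis in Case C handles. Once that is checked, the rest is verbatim bookkeeping.
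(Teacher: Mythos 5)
Your proposal is correct and follows essentially the same route as the paper: the paper's own proof of this lemma simply reruns the argument of \Cref{rounding-phase} and observes that replacing the tightness threshold $c_v \geq \frac{1}{2}$ by $c_v \geq \frac{b_v}{2} \geq \frac{1}{2}$ only helps, which is exactly the adaptation you identify (your version just spells out the feasibility check $c_v \leq b_v$ and the per-copy bookkeeping in more detail than the paper does).
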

\begin{proof}[Proof of \Cref{rounding-phaseB}]
As in the proof of \Cref{rounding-phase}, the edges of values $2^{-i}$ for $i=\ceil{\log \Delta}, \dotsc, 5$ are eliminated. We derive analogously that the fractional matching obtained at the end is a $14$-approximation, observing that changing the condition for tightness of a node from $c_v \geq \frac{1}{2}$ to $c_v \geq \frac{b_v}{2}\geq \frac{1}{2}$ only helps in the analysis. 
\end{proof}

In a final step, the almost integral solution is transformed into an integral one. Notice that for $b$-matchings, as opposed to standard matchings, the subgraph induced by edges with positive value need not have constant degree. In fact, a node $v \in V$ can have up to $16 b_v$ incident edges with non-zero value. This prevents us from directly applying the algorithm of \Cref{MM-bipartite-small-degree} to find a maximal matching in the subgraph with non-zero edge values, as this could take $O(\max_v b_v)=O(\Delta)$ rounds. 

\begin{lemma}\label{rounding-end-B}
There is an $O(1)$-round deterministic distributed algorithm that, given a $2^{-4}$-fractional $14$-approximate maximum $b$-matching in a $2$-colored bipartite graph, finds an integral $434$-approximate maximum $b$-matching.
\end{lemma}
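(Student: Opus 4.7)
The plan is to adapt the argument of \Cref{rounding-end} but handle the fact that the support of the fractional $b$-matching may have degree up to $16 b_v$ at each node $v$, which prevents a direct application of \Cref{MM-bipartite-small-degree}. The key trick is to reduce to a bounded-degree ordinary maximal matching problem via a local node-splitting, exploiting the $b$-structure.

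Let $G^+ = (V, E^+)$ be the subgraph induced by the edges with $x_e > 0$. Since the given matching is $2^{-4}$-fractional, every nonzero $x_e$ satisfies $x_e \geq 1/16$, and the $b$-matching constraint $\sum_{e \in E(v)} x_e \leq b_v$ immediately gives $d_{G^+}(v) \leq 16 b_v$ for every $v$. The first step is then to build, in zero communication rounds, an auxiliary bipartite graph $G''$ as follows: replace each $v \in V$ by $b_v$ copies and distribute its (at most $16 b_v$) $E^+$-incident edges arbitrarily among these copies so that each copy receives at most $16$ edges. Each copy inherits the color of its original node, so $G''$ inherits the $2$-coloring of $G$ and has maximum degree at most $16$.

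Next, invoke \Cref{MM-bipartite-small-degree} on $G''$: since we are given a $2$-coloring, a maximal matching $M$ in $G''$ can be computed in $O(16 + \log^* 2) = O(1)$ rounds. Interpreting $M$ back in $G$ (by identifying each copy with its original node) yields a valid integral $b$-matching in $G$, because each of the $b_v$ copies of $v$ can be incident to at most one edge of $M$, so $v$ itself is incident to at most $b_v$ matched edges.

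It remains to bound $|M|$. By \Cref{trivial-size-MM}(i) applied to $G''$ (with $|E(G'')| = |E^+|$ and $\Delta(G'') \leq 16$),
\[
|M| \;\geq\; \frac{|E^+|}{2 \cdot 16 - 1} \;=\; \frac{|E^+|}{31} \;\geq\; \frac{1}{31}\sum_{e \in E} x_e,
\]
where the last inequality uses $x_e \leq 1$. Since the input fractional $b$-matching is a $14$-approximation, $\sum_{e \in E} x_e \geq |M^*|/14$ for a maximum $b$-matching $M^*$, and hence $|M| \geq |M^*|/(31 \cdot 14) = |M^*|/434$, as required. The only conceptual step to be careful about is the copy-splitting, which must produce a valid $b$-matching upon collapsing and must preserve the $2$-coloring; both are immediate from the local construction.
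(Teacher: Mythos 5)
Your proposal is correct and follows essentially the same route as the paper: split each node $v$ into (at most) $b_v$ copies so that the support of the fractional $b$-matching decomposes into degree-$\leq 16$ pieces, run the Panconesi--Rizzi maximal matching algorithm there in $O(1)$ rounds using the inherited $2$-coloring, and bound the size via \Cref{trivial-size-MM}(i) to get $31\cdot 14 = 434$. The only cosmetic difference is that you phrase the decomposition as one auxiliary copy-graph while the paper phrases it as parallel constant-degree subgraphs $C_i$; these are the same construction.
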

\begin{proof}
We decompose the edge set induced by edges of positive value in the $2^{-4}$-fractional maximum $b$-matching $\left\{x_e^{(4)} \colon e \in E\right\}$ into constant-degree subgraphs $C_i=(V, E_i)$, as follows. We make at most $b_v$ copies of node $v$, and we arbitrarily split the edges among these copies in such a way that every copy has degree at most 16. This is done in a manner similar to the $2$-decomposition procedure.

In parallel, run the algorithm of \Cref{MM-bipartite-small-degree} on each $C_i$, in $O(1)$ rounds. This yields a maximal matching $M_i$ for each $C_i$ that trivially, by \Cref{trivial-size-MM} (i), satisfies the condition $|M_i| \geq \frac{|E_i|}{31}$. Now, let $M:=\bigcup_{i} M_i$. Since each node $v$ occurs in at most $b_v$ subgraphs and each $M_i$ is a matching in $C_i$, node $v$ cannot have more than $b_v$ incident edges in $M$. Thus, indeed, $M$ is a $b$-matching. Finally, observe that $M$ is $434$-approximate, since $|M|\geq \frac{|\{e \in E \colon x_e^{(4)} > 0\} |}{31}  \geq \frac{1}{31}\sum_{e \in E}x_e^{(4)}$. 
\end{proof}
 
A similar argument as in \Cref{const-approx-MaxM} shows that the algorithm for approximate maximum $b$-matchings in bipartite graphs from \Cref{const-approx-MaxBM-bipartite} can be adapted to work for general graphs.
\begin{lemma}\label{const-approx-MaxBM}
There is an $O\paren{\log^2\Delta+\log^*n}$-round deterministic distributed algorithm that computes a $c$-approximate maximum $b$-matching, for some constant $c$.   
\end{lemma}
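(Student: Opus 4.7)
The plan is to mimic the reduction from general graphs to 2-colored bipartite graphs used in the proof of \Cref{const-approx-MaxM}, adding a single extra step at the end to account for the fact that $b$-matchings, unlike ordinary matchings, can leave the bipartite solution $G' = (V, M_B)$ with node-degrees as large as $2b_v$ rather than $2$.

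First, fix an arbitrary orientation $\overrightarrow{E}$ of the edges of $G$ and construct the 2-colored bipartite graph $B=(V_{\text{in}}\cup V_{\text{out}},E_B)$ exactly as in the proof of \Cref{const-approx-MaxM}: split every $v$ into siblings $v_{\text{in}}$ (colour 1) and $v_{\text{out}}$ (colour 2) and add an edge $\{u_{\text{out}},v_{\text{in}}\}\in E_B$ for each $(u,v)\in\overrightarrow{E}$. Set $b_{v_{\text{in}}}=b_{v_{\text{out}}}=b_v$. Any $b$-matching $M^*$ in $G$ translates into a $b$-matching in $B$ of the same cardinality, since its edges incident to $v_{\text{in}}$ and $v_{\text{out}}$ together number at most $b_v$, so $|M^*_B|\geq |M^*|$. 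Apply \Cref{const-approx-MaxBM-bipartite} to obtain a $c$-approximate maximum $b$-matching $M_B$ in $B$ in $O(\log^2\Delta)$ rounds.

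Merging the siblings back turns $M_B$ into a set of edges on $V$; the induced subgraph $G'=(V,M_B)$ satisfies $d_{G'}(v)\leq 2b_v$ at every $v$ because $v_{\text{in}}$ and $v_{\text{out}}$ each contributed at most $b_v$ edges. This is the only point where the argument diverges from the matching case: in \Cref{const-approx-MaxM} one immediately gets $\Delta(G')\leq 2$, whereas here $M_B$ is only a $(2b)$-matching in $G$. To extract a genuine $b$-matching I invoke the $2$-decomposition trick on $G'$: replace each $v$ by $\lceil d_{G'}(v)/2\rceil\leq b_v$ copies and distribute the incident edges two per copy. The resulting graph has maximum degree $2$. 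Since an $O(\Delta^2)$-coloring of $G$ can be precomputed in $O(\log^* n)$ rounds (\Cref{precompute}) and induces a valid colouring on the $2$-decomposition, \Cref{MM-bipartite-small-degree} computes a maximal matching $M'$ of the $2$-decomposition in $O(\log^* n)$ further rounds. Projecting $M'$ back to $V$, each real node $v$ uses at most one edge per copy among its at most $b_v$ copies, so $M'$ is a $b$-matching in $G$.

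For the size bound, \Cref{trivial-size-MM}(i) applied to the $2$-decomposition (of max degree $2$) yields $|M'|\geq |M_B|/3$, and chaining gives $|M'|\geq |M_B|/3\geq |M^*_B|/(3c)\geq |M^*|/(3c)$, so $M'$ is a $3c$-approximate maximum $b$-matching, in a total of $O(\log^2\Delta+\log^* n)$ rounds. The main obstacle relative to the unweighted case is exactly the final extraction step: since $G'$ can have degree up to $2\Delta$ rather than $2$ one cannot directly invoke \Cref{MM-bipartite-small-degree} on $G'$ itself, and one must go through the $2$-decomposition. The key identity that makes this work is $\lceil 2b_v/2\rceil=b_v$, which guarantees that each copy's single matched edge contribution still respects the capacity $b_v$ while costing only the constant factor of $3$ from \Cref{trivial-size-MM}(i).
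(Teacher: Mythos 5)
Your proposal is correct and follows essentially the same route as the paper's proof: reduce to the 2-colored bipartite graph, apply \Cref{const-approx-MaxBM-bipartite}, merge the siblings back, take a 2-decomposition of the degree-$\leq 2b_v$ graph, and extract a maximal matching via \Cref{MM-bipartite-small-degree}, losing only the factor $3$ from \Cref{trivial-size-MM}(i). Your capacity argument (at most one matched edge per copy and at most $\lceil 2b_v/2\rceil=b_v$ copies) is a slightly more direct phrasing of the paper's case analysis on degree-1 copies, but the substance is identical.
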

\begin{proof}
Do the same reduction to a bipartite graph $B$ as in the proof of \Cref{const-approx-MaxM}, that is, create an in- and an out-copy of every node, and, for an arbitrary orientation of the edges, make each oriented edge incident to the respective copy of the corresponding nodes. 

Compute a $c$-approximate maximum $b$-matching $M_B$ in $B$ using the algorithm of \Cref{const-approx-MaxBM-bipartite}. Merging back the two copies of a node into one yields a graph with degree of node $v$ bounded by $2 b_v$, as $v_{\text{in}}$ and $v_{\text{out}}$ both can have at most $b_v$ incident edges in $M_B$. Now, compute a $2$-decomposition of this graph. On each component $C$ with edges $E_C\subseteq M_B$, find a maximal matching $M_C$ in $O(1)$ rounds by the algorithm of \Cref{MM-bipartite-small-degree}. 

Notice that for each node $v$ without a degree-1 copy, its degree is at least halved in $\bigcup_C M_C$ compared to $M_B$, and thus at most $b_v$. If a node $v$ has a degree-1 copy, then its degree need not be halved. But this can happen only if $v$'s degree in $M_B$ is odd, thus at most $2b_v-1$. In this case, $v$ has at most $b_v-1$ degree-2 copies and one degree-1 copy, which means that its degree in $\bigcup_C M_C$ is upper bounded by $b_v$. We conclude that $\bigcup_C M_C$ is indeed a $b$-matching.   

Moreover, it follows from $|M_C| \geq \frac{|E_C|}{3}$ by \Cref{trivial-size-MM} (i) that $\card{\bigcup_{C} M_C}\geq \frac{|M_B|}{3}\geq \frac{|M_B^*|}{3c} \geq \frac{|M^*|}{3c}$ for maximum $b$-matchings $M_B^*$ in $B$ and $M^*$ in $G$. Thus, $\bigcup_C M_C$ is $3c$-approximate.  
\end{proof}

\begin{proof}[Proof of \Cref{2+eps-approx-MaxM-bipartite} \textcolor{red}{for $b$-matching}]
Starting with $S_0=\emptyset$, $G_0=G$, and $b_v^0=b_v$ for all $v \in V$, for $i=0, \dotsc, k=O\paren{\log \frac{1}{\eps}}$, iteratively apply the algorithm of \Cref{const-approx-MaxBM} to $G_i$ with $b$-values $b_v^i$ 
to obtain a $c$-approximate maximum $b$-matching $M_i$ in $G_i$. Update $b_v^{i+1}=b_v^i-d_{M_i}(v)$ and $G_{i+1}=\paren{V, E_{i+1}}$ with $E_{i+1}:=E_i \setminus \paren{M_i \cup \left\{\{u,v\} \in E_i \colon b_v^{i+1}=0 \text{ or } b_u^{i+1} =0\right\}}$, that is, reduce the $b$-value of each vertex $v$ by the number $d_{M_i}(v)$ of incident edges in the matching $M_i$ and remove $M_i$ as well as all the edges incident to a node with remaining $b$-value 0 from the graph. The same analysis as in the proof of \Cref{2+eps-approx-MaxM-bipartite} for standard matchings in \Cref{wrapup} goes through and concludes the proof. 
\end{proof}

\begin{remark}\label{remark-remaining-bmatching-size}
The analysis above shows that the $b$-matching $M$ returned by the $b$-matching approximation algorithm of \Cref{2+eps-approx-MaxM-bipartite} is not only $(2+\eps)$-approximate in $G$, but also has the property that any $b$-matching in the remainder graph, after removing $M$ and all edges incident to a vertex $v$ with $b_v$ incident edges in $M$, can have size at most $\eps|M^*|$ for a maximum $b$-matching $M^*$ in $G$.    
\end{remark}

\subsection{Weighted Matching}\label{sec:weightedmatchings}
\begin{proof}[Proof Sketch of \Cref{2+eps-approx-weighted-MaxM}]
Using the idea from \cite[Section 4]{lotkerMatchingImproved}, we can iteratively invoke the constant-approximation algorithm of \Cref{const-approximate-MaxWM} $O(\log \frac{1}{\eps})$ times to get a $(2+\eps)$-approximate maximum weighted matching.

In each of the iterations $i \geq 1$, we set up a new auxiliary weighted graph as follows. Let $M_{i-1}$ be the matching obtained in the previous iteration. For every edge $e \in M_{i-1}$, let $w(e)=0$, and for every edge $e \notin M_{i-1}$, set $w(e)$ to the gain if $e$ is added to $M_{i-1}$ and the (possibly) incident edges in $M_{i-1}$ are deleted (if we lose by this change, we set $w(e)=0$). We then run the algorithm of \Cref{const-approximate-MaxWM} to get a $256$-approximate maximum weighted matching in this auxiliary graph, and augment $M_{i-1}$ along the edges in the matching in the auxiliary graph, i.e. add all edges in the matching to $M_{i-1}$ and remove all the possibly incident edges. Lemma 4.3 in \cite{lotkerMatchingImproved} (see also \cite{pettie2004simpler}) shows that then $w(M_i) \geq \frac{1}{2} \left(1-e^{-\frac{2}{3\cdot 256} i}\right)w(M^*)$. Thus, after $O(\log \frac{1}{\eps})$ iterations, a $(2+\eps)$-approximate maximum weighted matching is found.
\end{proof}

\subsection{Edge Dominating Set}\label{sec:EDS}
An edge dominating set is a set $D \subseteq E$ such that for every $e \in E$ there is an $e'\in D$ such that $e \cap e' \neq \emptyset$. A minimum edge dominating set is an edge dominating set of minimum cardinality. Since any maximal matching is an edge dominating set, an almost maximal matching can easily be turned into an edge dominating set: additionally to the edges in the almost maximal matching, add all the remaining (at most $\eps' |E|$ many) edges to the edge dominating set. When $\eps'$ is small enough, the obtained edge dominating set is a good approximation to the minimum edge dominating set. We next make this relation more precise. 

\begin{proof}[Proof of \Cref{2+eps-approx-EDS}] 
Apply the algorithm of \Cref{eps-almost-MM-bipartite} with $\eps'=\frac{\eps}{4 \Delta}$, say, to find an $\eps'$-maximal matching $M$ in $G$. It is easy to see that $D=M \cup \paren{E \setminus \Gamma^+(M)}$ is an edge dominating set. Moreover, due to the fact that a minimum maximal matching is a minimum edge dominating set (see e.g. \cite{YannakakisGavril1980EdgeDominatingSets}) and since maximal matchings can differ by at most a $2$-factor from each other, by \Cref{trivial-size-MM} (ii), it follows, also from \Cref{trivial-size-MM} (i), that $(2+\eps)|D^*|\geq \paren{1+ \frac{\eps}{2}}|M| \geq |M| + \frac{\eps}{2(2\Delta-1)}|E|> |M| + \eps'|E|\geq|D|$.
\end{proof}

\section*{Acknowledgment}
I want to thank Mohsen Ghaffari for suggesting this topic, for his guidance and his support, as well as for the many valuable and enlightening discussions. I am also thankful to Seth Pettie for several helpful comments.

\bibliographystyle{alpha}
%
\bibliography{ref2}


\end{document}